\newtheorem{thm}{Theorem}
\newtheorem{lem}[thm]{Lemma}
\newtheorem{prop}[thm]{Proposition}
\newtheorem{rem}[thm]{Remark}
\newenvironment{proof}[1][{}]{\par\vskip12ptplus30pt\noindent{\it Proof{#1}:\ }}%
{\hfill$\blacksquare$\par\vskip12ptplus60pt}
\def\1#1{{\bf #1}}
\def\2#1{{\mathcal #1}}
\def\4#1{{\tt #1}}
\def\5#1{{\sf #1}}
\def\6#1{{\mathfrak #1}}
\def\7#1{{\Bbb #1}}
\def\8#1{{\rm #1}}
\def\9#1{{\mathcurl #1}}
\DeclareFontFamily{OT1}{rsfs}{}
\DeclareFontShape{OT1}{rsfs}{m}{n}{<-7> rsfs5 <7-10> rsfs7 <10-> rsfs10}{}
\DeclareMathAlphabet\mathcurl{OT1}{rsfs}{m}{n}
\definecolor{grey}{rgb}{0.5,0.5,0.5}
\def\I{\openone}
\newcommand{\SPROD}[2]{\langle{#1},{#2}\rangle}
\tikzstyle{prep}=[draw=black!80, fill=red!40, thick, minimum width=0.5cm, text centered, minimum height=0.5cm]
\tikzstyle{meas}=[draw=black!80, fill=green!40, thick, minimum width=0.6cm,
\tikzstyle{locmeas}=[draw=black!80, fill=blue!40, thick, minimum width=0.5cm,
\tikzstyle{oper}=[draw=black!80, fill=blue!40, thick, minimum width=0.5cm, text centered, minimum height=0.5cm]
\tikzstyle{textbox}=[draw, fill=white, text width=3cm, text centered, minimum height=.6cm]
\tikzstyle{emptybox}=[text centered, minimum height=.6cm,anchor=south]
\tikzstyle{shadow}=[draw=black!50, fill=black!50, text width=3cm, text centered, minimum height=.6cm]
\tikzstyle{entangnb}=[shape = circle, draw=red, fill=red]
\tikzstyle{entangns}=[shape = circle, draw=red, fill=red, scale=.5]
\tikzstyle{redb}=[shape = circle, draw=red, fill=red, scale=.5]
\tikzstyle{blueb}=[shape = circle, draw=blue, fill=blue, scale=.5]
\tikzstyle{grayb}=[shape = circle, draw=gray, fill=gray, scale=.5]
\tikzstyle{quant}=[->,snake=snake,draw=blue,line width=2pt,line after snake=1mm] 
\tikzstyle{quant1}=[snake=snake,draw=blue,line width=2pt] 
\tikzstyle{class}=[line width=2pt]
\tikzstyle{entang}=[snake=coil,draw=red,line width=2pt,segment amplitude=8pt,line after snake=1mm] 
\tikzstyle{entang1}=[snake=coil,draw=red, segment amplitude=3pt]
\begin{document}
\pagestyle{fancy}
\title{On the bosonic behavior of mean-field fluctuations in atomic ensembles}

\author{Michael Keyl}
\email{d.schlingemann@tu-bs.de}
\affiliation{ISI Foundation, Quantum Information Theory Unit,\\
Viale S. Severo 65, 10133 Torino, Italy}
\affiliation{Institut f\"ur Mathematische Physik, Technische
Universit\"at Braunschweig, Mendelssohnstra{\ss}e~3, 38106 Braunschweig, Germany}

\author{Dirk-M. Schlingemann}
\email{d.schlingemann@tu-bs.de}
\affiliation{ISI Foundation, Quantum Information Theory Unit,\\
Viale S. Severo 65, 10133 Torino, Italy}
\affiliation{Institut f\"ur Mathematische Physik, Technische
Universit\"at Braunschweig, Mendelssohnstra{\ss}e~3, 38106 Braunschweig, Germany}

\author{Zoltan Zimboras}
\email{d.schlingemann@tu-bs.de}
\affiliation{ISI Foundation, Quantum Information Theory Unit,\\
Viale S. Severo 65, 10133 Torino, Italy}

\pacs{03.67.-a, 02.30.Tb}

\date{\today}
\begin{abstract}
The mean field fluctuations of large atomic ensembles can behave like bosonic
modes, i.e. they induce a state on an appropriate system of bosonic modes.
The most prominent example is that, if the atomic ensemble is in a
homogenous product state, then the mean-field fluctuations are inducing a
Gaussian state on a system of bosonic modes. In the present paper we show that
for atomic ensemble states with exponentially decaying correlations (e.g. with
respect to the distance of atoms) the mean-field fluctuations are inducing
(possibly non-Gaussian) states on the on a system of bosonic modes. This
result is true for a general lattice of atomic systems that is equipped with a
reasonable distance function. 
\end{abstract}

\maketitle

\section{Introduction}
\label{sec:introduction}
In many body quantum systems, the mean-field theory is focusing on the
average behavior of single constituents (see e.g. \cite{RaggWer89,RaggWer91,Werner92}). Large atomic ensembles are
systems of this kind. Typical global states of these systems have the property
that the state restricted to an single atom is independent of the individual
atom. If we average the state restricted to a single atom over all atoms
of the ensemble, we obtain the same state as restricted to each individual
atom. For asymptotically large systems, this mean-field average corresponds to
an effective classical systems. Since one is only concerned with expectation
values of observable of single atoms, correlations between different atoms are
irrelevant for the mean-field limit. 

A kind of ``first-order correction'' to the classical mean-field limit are
``mean-field fluctuations''. Observables which for testing mean-field
fluctuations are build as follows: One looks at the deviation of a single atom
observable from its mean-field expectation value. Whereas the mean-field
expectation value is the same for each single atom, the expectation value of
the deviation may depend on the individual atom. A ``mean-field fluctuation
observable'' (fluctuation operator) is an appropriate average of the
individual mean-field deviations over all atoms. 

It is well known, that if a large atomic ensemble is prepared in a homogenous
product state, i.e. each single atom is individually prepared in the same
state, the mean-field fluctuations effectively behave like a systems of
non-interacting bosonic modes. In other words, a homogenous product
state of a large atomic ensemble, induce (via mean-field fluctuations) a ``Gaussian state'' on a system of bosonic
modes. This statement has to be interpreted in the limit of infinitely large
systems. This is also related to the well known ``Holstein-Primakoff transformation'' \cite{PhysRev.58.1098}
which relates large spin systems to bosonic systems.

The ``bosonic nature'' of mean-field fluctuations for a large atomic
ensemble can also be interpreted as ``simulating'' bosonic systems by large
atomic ensembles. 

It can be observed in experiments that mean-field fluctuations can have a ``bosonic behavior'' 
by building interfaces between atomic ensembles and light
\cite{RevModPhys.82.1041}. 
Here, a laser is appropriately interacting with a gas
of atoms confined to a glass box at room temperature. The state of the laser
field can be stored into the atomic ensemble by using the effective degrees of
freedom of the mean-field fluctuations. Conversely, one can also perform an
inverse process, by transferring the state of the mean-field fluctuations of
an atomic ensemble to bosonic modes of a light field. 

One can also imagine to use a similar 
technique in order to store the state of a laser field into an ensemble of
atoms that are trapped by the periodic
potential of an optical lattice. The interesting point is here, that the implementation of
quantum cellular automata for optical lattice systems is a natural task. In an
optical lattice, each atoms occupies a lattice site. A quantum cellular
automaton is a global (mostly reversible) quantum operation whose local action
on a single site subsystem only affects the neighboring sites \cite{SchuWer04}.  

A very interesting issue combines the simulation of bosonic modes by atomic
ensembles, on one hand, with the implementation of quantum cellular automata which acts on
atomic ensembles, on the other hand: First, store the state of a laser field 
into an ensemble of atoms (arrange within a optical lattice), then implement
an quantum cellular automaton acting on the atomic ensemble, and finally
``release the light'' from the atomic ensemble. By the overall process, we
obtain an incoming light field and a outgoing scattered light field. The
question is now: 

\begin{quote}
What kind of effective operation is describing this ``scattering process''?
\end{quote}

We expect to obtain operations that go beyond the ``Gaussian world'' 
which can be used as ``non-Gaussian addons''. Since Gaussian operations are
limited in their ability to perform quantum information tasks, this may open a
door to perform new tasks.  

Whether the mean-field fluctuations of a large atomic ensemble behave like
bosonic modes depends on the state of the atomic ensemble. If we want to find
an answer to the question given above, we will answer the following 
question first:

\begin{quote}
For which states of large atomic ensembles do the mean-field fluctuations
behave like bosonic modes and is there a set of states with bosonic mean
field fluctuations which is invariant under application of quantum cellular automata? 
\end{quote}  

We shall see that Theorem~\ref{thm-main} provides an answer to this question. 
For states of large atomic ensembles whose correlations are exponentially
decaying with the distance of the single atoms (exponential clustering), the mean-field fluctuations
behave like bosonic modes. In particular, states with exponential clustering
are invariant under quantum cellular automata since the action of a quantum
cellular automaton on a single site system only affects a finite set of neighbors. 
  
As a consequence, the following process is possible: A laser field is
interacting with a large atomic ensemble such that the Gaussian state of the laser field 
is encoded into a homogenous product state of the atoms. A quantum cellular
automaton acting on the atomic ensemble is implemented. The resulting state of
the atomic ensemble possesses again bosonic mean-field fluctuations.  Finally,
we can use again the interaction between the laser field and the atomic
ensemble to transfer the state of the atomic ensemble (almost faithfully) to  
the bosonic modes of the laser.  

The total process induces an operation on bosonic modes which maps an initial
Gaussian state to some bosonic state, which can be non-Gaussian. With help of
Theorem~\ref{prop-expansion}, the correlation functions of the resulting state
can be written as a perturbation of a Gaussian state. This may be helpful in
oder to decide which states of atomic ensembles correspond to Gaussian states.

\subsection*{Outline of the paper}

In Section~\ref{sec:univ-descr-cont} we provide the appropriate mathematical
tools for describing mean-field fluctuation. Fixing a given atomic ensemble,
it depends on the state which kind of system of bosonic modes (if there is
one) is corresponding to the mean-field fluctuations. This requires to compare
different bosonic systems even if they differ by its  canonical commutator
relations. The tensor algebra provides a universal description that covers all
different bosonic systems at once. Which bosonic system is realized is part of
the state of this ``universal continuous variable system''.

How to describe and analyze mean-field fluctuation by using the framework of
universal continuous variable systems (tensor algebras) is discussed in 
Section~\ref{sec:mean-field-fluct}. Here, we also present the main results
(Theorem~\ref{prop-expansion}, Theorem~\ref{thm-main}). 
Technical supplements in order to give self-contained proofs are postponed to
the sections in the appendix.

\subsection*{Acknowledgment}
This work was supported by the EU FP7 FET-Open project COQUIT (contract
number 233747).

\section{Universal description of continuous variable systems}
\label{sec:univ-descr-cont}
Our goal is to relate systems of atomic ensembles with
bosonic systems. In this section, we explain this relation in mathematical
detail and generality, by using the algebraic approach to quantum mechanics.
Here systems are given in terms of their observable algebras, which, in our
case, are C*-algebras or more general *-algebras when unbounded operators are
included.  

\subsection{The tensor algebra a universal playground} 
We now introduce a formalism for describing general continuous variable
systems in a uniform manner, which is not so frequently used, but which has the advantage that the
Holstein-Primakov transformation can be implemented easily and naturally.

Let $V$ be a complex vector space with a complex conjugation $J$. The tensor
algebra over $(V,J)$ is the unital associative *-algebra that is given by the complex
vector space 
\begin{equation}
\9T(V,J)=\bigoplus_{n\in\7N} V^{\otimes n} \; .
\end{equation}
The product, which is just given by the tensor product, is determined by  
\begin{equation}
(v_1\otimes v_2\otimes \cdots\otimes v_n)(w_1\otimes w_2\otimes\cdots\otimes
w_m)= v_1\otimes v_2\otimes \cdots\otimes v_n\otimes w_1\otimes w_2\otimes\cdots\otimes
w_m 
\end{equation}
and the adjoint is determined by 
\begin{equation}
(v_1\otimes v_2\otimes \cdots\otimes v_n)^*=Jv_{n}\otimes
Jv_{n-1}\otimes\cdots\otimes Jv_1 \; ,
\end{equation}
where $v_1,\cdots, v_n,w_1,\cdots, w_m\in V$. Note that $V^{\otimes 0}\cong
\7C$ corresponds to the multiples of the unit operator $\11$. Obviously, there
is a linear embedding $\Phi$ of $V$ into the tensor algebra $\9T(V,J)$ such
that $\Phi(v)^*=\Phi(Jv)$. In the following, we call the operators $\Phi(v)$
``generalized field operators''. 

The tensor algebra $\9T(V,J)$ represents the
observable algebra for a wider class of continuous variable systems. The detailed type of the
system, e.g. fermionic or bosonic, is encoded in the states under
consideration.  
A state $\omega$ is described by normalized positive linear functional on the
tensor algebra, i.e. $\omega(A^*A)\geq 0$ and
$\omega(\11)=1$. Each state is determined by the $n$-point correlation
functions 
\begin{equation}
\omega_n(v_1,\cdots ,v_n)=\omega(v_1\otimes\cdots\otimes v_n)
\end{equation}
where $\omega_n$ is a $n$-multi-linear functional on $V$.

How is all this related to the
ordinary Hilbert space formalism of quantum mechanics? Well, with help of the
so called GNS representation we obtain a Hilbert space $\2H_\omega$, a vector $\Omega_\omega$
and a *-representation $\pi_\omega$ by linear (but
not necessarily bounded) operators on $\2H_\omega$ with
$\omega(A)=\SPROD{\Omega_\omega}{\pi_\omega(A)\Omega_\omega}$. This is just a
consequence of the positivity of the functional $\omega$. 

\subsection{Realizing bosonic systems}
As a first example, let us have a look at the bosonic systems. For this purpose we
construct a``quasi-free bosonic state'' from a bilinear form, called the covariance 
$W$, on $V$. In order to obtain a positive functional, the positivity
condition $W(Jv,v)\geq 0$ has to be fulfilled.  The corresponding quasi-free
state is determined according to the following conditions: For $n>0$ we put 
\begin{equation}
\label{eq:2}
\omega_W(v_1\otimes\cdots\otimes v_n):=\sum_{P\in \Pi_2(n)} \prod_{(i,j)\in P} W(v_i,v_j)
\end{equation}  
and $\omega_w(\11)=1$, where $\Pi_2(n)$ is the set of ordered partitions of
$\{1,\cdots ,n\}$ into two-elementary subsets. Note that the sum is empty for
odd $n$. 

We associate to $W$ the hermitian form $\gamma$ which is given by
$\gamma(v_1,v_2)=W(Jv_1,v_2)-W(v_2,Jv_1)$. Using the Araki's self-dual
formalism, the self-dual CCR algebra $\8{CCR}(V,J,\gamma)$ is the *-algebra
that is constructed as follows: Let $\9J_\gamma$ be the two sided ideal in
$\9T(V,J)$ that is generated by the operators
$\Phi(v)^*\Phi(v')-\Phi(v')\Phi(v)^*-\gamma(v,v')\11$. Then the corresponding
self-dual CCR algebra is given by the quotient *-algebra   
\begin{equation}
\8{CCR}(V,J,\gamma):=\9T(V,J)/\9J_\gamma \; .
\end{equation}
As we will briefly sketch below, the state $\omega_W$ annihilates the ideal
$\9J_\gamma$, which implies that  $\omega_W$ induces a unique state on
$\8{CCR}(V,J,\gamma)$. Two quasi-free states $\omega_W$ and $\omega_{W'}$ on
the tensor algebra belong to the same Bosonic system if $W(Jv,v')-W(v',Jv)=
W'(Jv,v')-W'(v',Jv)=\gamma(v,v')$. In this case, both states annihilates the
ideal $\9J_\gamma$ and can be lifted to the same CCR algebra. 

\begin{rem}\em
The quasi-free states $\omega_W$ have the special property to be ``even'',
i.e. the expectation value of a single generalized field operator is vanishing
$\omega_W(\Phi(v))=0$. To obtain all quasi-free states, we take advantage of
the following fact: Let $V_{\7R}^*$ be the real vector space of real continuous linear functionals on
$V$. Note that a functional $u\in V^*$ is real if it fulfills the condition
$u(Jv)=\overline{u(v)}$ for all $v\in V$. If we regard $V_{\7R}^*$ with its addition as an Abelian group, then $V^*_{\7R}$  
is acting by *-automorphisms on the tensor algebra $\9T(V,J)$. For each $u\in
V_{\7R}^*$, we define the *-automorphism $\alpha_u\in\8{Aut}(\9T(V,J))$ according to 
\begin{equation}
\alpha_u\Phi(v):=\Phi(v)+u(v)\11 \; .
\end{equation}
By construction, the group law is fulfilled, i.e. $\alpha_{u_1}
\alpha_{u_2}=\alpha_{u_1+u_2}$ is valid for all $u_1,u_2\in V^*_{\7R}$. To
obtain a quasi-free state with a non-vanishing one-point function, we just
``shift'' an even quasi-free state $\omega_W$ by an appropriate automorphism
$\alpha_u$ yielding the quasi-free state $\omega_{W,u}=\omega_W\circ \alpha_u$
which has the one-point function
$\omega_{W,u}(\Phi(v))=\omega_W(\Phi(v))+u(v)=u(v)$.  
\end{rem}

\subsection{Ideals to specify more detailed systems}
The discussion of the previous subsection shows that the tensor algebra can be
indeed used to describe various
systems by one unified object. To specify a more particular sub-class of
systems additional algebraic relations has to be respected. This corresponds
to a proper two-sided ideal $\9J\subset\9T(V,J)$. By inclusion, the set of
two-sided ideals is partially ordered. As larger the ideal, as more specific
is the systems class under consideration. For instance, if the hermitian form
$\gamma$ is non-degenerate, then the ideal $\9J_\gamma$ which describes the
corresponding CCR-relations is maximal: This can be interpreted as the most
specific description of a system, here for a set of bosonic modes.  Each state 
$\omega$ is accompanied with a natural system that is 
given by the quotient algebra $\9A_\omega:=\9T(V,J)/\9J_\omega$, where
$\9J_\omega$ is the two-sided ideal $\9J_\omega:=\{A|\forall B,C:\omega(B^*AC)=0
\}$. Note that, by construction,
$\9J_\omega$ does not contain the identity operator and is therefore a proper
ideal.    

\subsection{Comparison of states}
But what does it mean, that two states on the tensor algebra are close to each
other? To give a precise answer to this question, we need to compare states
quantitatively. For this purpose, we assume that $V$ is a Banach space.
The dual space of the tensor algebra $\9T(V,J)$ is denoted by $\9T(V,J)^*$. It
consists of all linear functionals $F:\9T(V,J)\to \7C$ for which for all $n\in
\7N$ the semi-norms
\begin{equation}
\nu_n(F):=\sup_{(v_1,\cdots,v_n)\in V_1^n}|F(v_1\otimes\cdots\otimes v_n)| <\infty
\end{equation} 
are bounded, where $V_1=\{v\in V|\|v\|=1\}$ is the unit sphere. Now, $\9T(V,J)^*$ is closed in the following topologies:
\begin{itemize}
\item
The
{\em strong topology} is the locally convex topology that is induced by the
family of semi-norms $\nu_n$, $n\in \7N$. 
\item
The {\em weak topology} is the locally convex topology that is induced by the
family of semi-norms
$\nu_{(v_1,\cdots,v_n)}(F):=|F(v_1\otimes\cdots\otimes v_n)|$ with  $(v_1,\cdots,v_n)\in\bigcup_{k}V^k$.
\end{itemize}

From an experimental perspective, the strong topology is related to the  
comparison of two states $\omega$ and
$\omega'$. Suppose we  estimate for a finite family of vectors $v_1,\cdots,
v_n\in V$ 
the correlation functions 
$\omega(v_1\otimes\cdots\otimes v_n)$ and 
$\omega'(v_1\otimes\cdots\otimes v_n)$. Then the modulus of the difference 
of the correlation functions can be used as a measure how 
``close'' $\omega$ and $\omega'$ are to each other. 

To give an example, we consider the correlation functions of two
quasi-free states $\omega_W$ and $\omega_{W'}$, where the covariances $W,W'$
have a norm difference that is given by $\|W-W'\|=\sup_{v_1,v_2\in V_1}|W(v_1,v_2)-W'(v_1,v_2)|$.    

\begin{prop}
\label{prop:quasi-free-contin}
Let  $\omega_W$ and $\omega_{W'}$ be quasi-free states with covariances $W$
and $W'$ respectively, then for each $n\in\7N$ the semi-norm difference of the
quasi-free states satisfies the bound 
\begin{equation}
\nu_n(\omega_W-\omega_{W'})\leq\|W-W'\| \ 
|\Pi_2(n)|\sum_{k=1}^{n/2} \ \|W\|^{k-1}\|W'\|^{n/2-k} \; .
\end{equation} 
\end{prop}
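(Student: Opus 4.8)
The plan is to estimate the difference of the two $n$-point correlation functions directly from their explicit pairing formula \eqref{eq:2}. Since both states are evaluated on the same tensor $v_1\otimes\cdots\otimes v_n$ with each $v_i\in V_1$, I would write
\begin{equation}
\omega_W(v_1\otimes\cdots\otimes v_n)-\omega_{W'}(v_1\otimes\cdots\otimes v_n)=\sum_{P\in\Pi_2(n)}\left(\prod_{(i,j)\in P}W(v_i,v_j)-\prod_{(i,j)\in P}W'(v_i,v_j)\right),
\end{equation}
so the whole problem reduces to bounding, for a \emph{single} fixed pairing $P$, the difference of the two products of $n/2$ factors, and then summing over the $|\Pi_2(n)|$ pairings. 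For odd $n$ both sides vanish and the bound holds trivially, so I only treat even $n$.

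The key step is the standard telescoping (``swap one factor at a time'') identity for the difference of two products. Writing $a_{(i,j)}=W(v_i,v_j)$ and $b_{(i,j)}=W'(v_i,v_j)$ and enumerating the $n/2$ pairs in $P$ as $p_1,\dots,p_{n/2}$, I would use
\begin{equation}
\prod_{\ell=1}^{n/2}a_{p_\ell}-\prod_{\ell=1}^{n/2}b_{p_\ell}=\sum_{k=1}^{n/2}\left(\prod_{\ell<k}a_{p_\ell}\right)\bigl(a_{p_k}-b_{p_k}\bigr)\left(\prod_{\ell>k}b_{p_\ell}\right).
\end{equation}
Because every $v_i\in V_1$, each factor obeys $|a_{p_\ell}|\le\|W\|$, $|b_{p_\ell}|\le\|W'\|$, and $|a_{p_k}-b_{p_k}|\le\|W-W'\|$ by definition of the covariance norms. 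The $k$-th telescoping term therefore has $k-1$ factors of type $W$, one factor $\|W-W'\|$, and $n/2-k$ factors of type $W'$, giving the bound $\|W-W'\|\,\|W\|^{k-1}\|W'\|^{n/2-k}$. Summing over $k=1,\dots,n/2$ bounds the contribution of one pairing $P$ by $\|W-W'\|\sum_{k=1}^{n/2}\|W\|^{k-1}\|W'\|^{n/2-k}$, a quantity independent of $P$.

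Finally I would take the supremum over $(v_1,\dots,v_n)\in V_1^n$ that defines $\nu_n$ and use the triangle inequality over the $|\Pi_2(n)|$ pairings. Since the per-pairing bound is already uniform in the $v_i$, the supremum only multiplies it by the number of pairings, yielding exactly
\begin{equation}
\nu_n(\omega_W-\omega_{W'})\le\|W-W'\|\,|\Pi_2(n)|\sum_{k=1}^{n/2}\|W\|^{k-1}\|W'\|^{n/2-k},
\end{equation}
which is the claimed inequality. I do not expect any genuine obstacle here: the only points requiring a little care are keeping the exponent bookkeeping in the telescoping sum consistent with the stated $k-1$ and $n/2-k$ powers, and noting that the factorwise bounds are valid precisely because the arguments lie on the unit sphere. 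The argument is purely combinatorial and uses nothing beyond the definition \eqref{eq:2} and the submultiplicativity of the modulus.
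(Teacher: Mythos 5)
Your proposal is correct and follows essentially the same route as the paper's own proof: both expand the difference via the pairing formula \eqref{eq:2}, apply the telescoping identity for a difference of products of $n/2$ covariance factors, bound each factor by $\|W\|$, $\|W'\|$, or $\|W-W'\|$ using the boundedness of the bilinear forms, and sum over the $|\Pi_2(n)|$ pairings. The only cosmetic differences are that you restrict to the unit sphere from the start (the paper keeps the norms $\|v_i\|$ explicit) and you note the trivial odd-$n$ case explicitly.
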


A direct consequence of the proposition (which we prove in the appendix) is that, if $W\to W'$ are converging
in norm, then $\omega_W\to\omega_{W'}$ converges in the strong topology. In
other words the mapping $W\to\omega_W$ is continuous in the respective
topologies. 
 
\section{Mean-field fluctuations}
\label{sec:mean-field-fluct} 
Many systems under consideration possessing a large number of independent degrees of
freedom such that they can be idealized by infinite systems in the
thermodynamic limit. Here we model this situation by an infinite (countable) 
lattice $\Lambda$ that possesses a distance function
$d:\Lambda^2\to\7R_+$. The observable algebra of the global system is the
so called {\em quasi-local algebra} that is constructed by the infinite tensor
product 
\begin{equation}
\6A(\Lambda)=\bigotimes_{x\in\Lambda}\6A(\Lambda)
\end{equation}
of single cell C*-algebras $\6A\cong\6A(x)$. For a given lattice point $x\in X$, the natural embedding 
of the single site algebra which identifies $\6A$ with $\6A(x)\subset\6A(\Lambda)$
is denoted by $\iota_x$. We are going to use this mapping later on quite often.

However, in view of non-equilibrium thermodynamics, the nature of global states of an
infinite systems can be very different from equilibrium states and
the calculation of expectation values for such states may be a hard
computational task. The analysis of global states, one looks at the
asymptotic behavior of certain properties within the mesoscopic range. For
this purpose, one restricts the global state to the local observable algebras
that correspond to finite subsets sets $X\subset\Lambda$ which 
is given by the finite tensor product
\begin{equation}
\6A(X)=\bigotimes_{x\in X}\6A(x) \, .
\end{equation}
Note that for an inclusion $X\subset Y\subset\Lambda$, it follows immediately
that $\6A(X)\subset \6A(Y)$.  

Taking a global state $\omega_\Lambda$, we obtain for each finite subset
$X\subset \Lambda$ a restricted state $\omega_X:=\omega_\Lambda|_{\6A(X)}$
which lives on finitely many degrees of freedom. This yields a net of sates
$(\omega_X)_{X\subset\Lambda}$ that is indexed by the partially ordered set of
finite subsets of the lattice $\Lambda$. Roughly speaking, the basic idea behind the
Holstein-Primakov transformation is to analyze the behavior of
each of the states $\omega_X$ concerning their ``bosonic nature'', i.e. to
what extend they ``simulate'' continuous variable systems. We shall see, that
each restriction $\omega_X$ induces a state $\hat\omega_X$ on the tensor
algebra $\9T(\6A,*)$, where $\6A$ is the observable algebra of a single cell
system. To be of ``bosonic nature'', the induced state $\hat\omega_X$ has to
fulfill ``almost'' the canonical commutation relations. This means that there
is an antisymmetric hermitian form $\gamma$ such that the induced state
$\hat\omega_X$ is ``almost'' annihilating the ideal $\9J_\gamma$: A typical
behavior is $\hat\omega_X(A)=O(|X|^{-1/2})$ for each operator $A$ that belongs
to the ideal $\9J_\gamma$.

\subsection{Inducing states and $\sqrt{\rm n}$-fluctuations}
Let $\omega_\Lambda$ be a state of the global system. Then we obtain the net
of restricted states $(\omega_X)_{X\subset\Lambda}$ that are indexed by the
partially ordered set of finite subsets $X\subset \Lambda$. The induction of
states works by using ``fluctuation operators'' 
associated with the restricted state $\omega_X$ and an operator $a\in\6A$:   
\begin{equation}
\Phi_{\omega_X}(a):=\frac{1}{|X|^{1/2}}\sum_{x\in X}
[\iota_{x}a-\omega_X(\iota_{x}a)\11] \; .
\end{equation}  
This yields a representation $\Phi(a)\mapsto\Phi_{\omega_X}(a)$ of the tensor
algebra and the induced state 
$\hat\omega_{X}$ is determined by its $n$-point functions according to 
\begin{equation}
\hat\omega_{X}(a_1\otimes\cdots\otimes a_n):=\omega_X(\Phi_{\omega_X}(a_1)\cdots\Phi_{\omega_X}(a_n))
\; .
\end{equation}  

The main goal is now to study the asymptotic limit of large systems. For this
purpose, let $\Lambda$ be a countable lattice.  Let $(\omega_X)_{X\subset
  \Lambda}$ be a net of states that is indexed by finite subsets of
$\Lambda$, where $\omega_X$ is a state on $\6A(X)$. The asymptotic properties
in the limit $X\to\Lambda$ can be investigated by looking at the net of
induced states $(\hat\omega_{X})_{X\subset\Lambda}$ according to the classification:
    
\begin{itemize}
\item
The state $\omega_\Lambda$  has
$\sqrt{\rm n}$-fluctuations if  the induced net
$(\hat\omega_{X})_{X\subset\Lambda}$ converges 
$w-\lim_{X\to\Lambda}\hat\omega_{X}=\hat\omega_{\Lambda}$ in the weak
topology on $\9T(\6A,*)$.
\item 
The state $\omega_\Lambda$  has
strongly $\sqrt{\rm n}$-fluctuations if  the induced net
$(\hat\omega_{X})_{X\subset\Lambda}$ converges 
$s-\lim_{X\to\Lambda}\hat\omega_{X}=\hat\omega_{\Lambda}$ in the strong
topology on $\9T(\6A,*)$.
\item
The state $\omega_\Lambda$  has
weakly $\sqrt{\rm n}$-fluctuations if for the induced net
$(\hat\omega_{X})_{X\subset\Lambda}$ the each semi-norm $\nu_n$, $n\in \7N$, is uniformly bounded:  
$\sup_{X\subset\Lambda}\nu_n(\hat\omega_{X})<\infty$.
\end{itemize}

Obviously, strongly $\sqrt{\rm n}$-fluctuations implies $\sqrt{\rm
  n}$-fluctuations implies weakly $\sqrt{\rm n}$-fluctuations.
We are now considering states that are ``single site homogenous''. 
These states are defined by the property that their restrictions 
to a single site is independent of the
lattice point.

\subsection{Induced states for asymptotically large systems}
Asymptotically large atomic ensembles can be described by an infinite lattice
system which is in some state $\omega_\Lambda$. Suppose we assume that the
corresponding induced net of states  $(\hat\omega_{X})_{X\subset\Lambda}$  has
weakly $\sqrt{\rm n}$-fluctuations. What conclusions can we draw from this
property? What do we know about the asymptotic behavior of the correlation
functions of the induced states $\hat\omega_{X}$?

Since each semi-norm $\nu_n(\hat\omega_X)$ is uniformly bounded in the size
of the subset $X$, we know that there are weak limit points. In order to
analyze the properties of these limit points more systematically, we will give 
here an ``operational'' description of what limit points are. 

Within a concrete experimental realization, the atomic ensemble under consideration
will be always finite. If the setup is scalable, then, at least in principle,
the same experiment can be performed for various sizes of the system, i.e. the
subset $X$ can be regarded as a ``classical configuration''. Here one can also
think of a situation, where atoms occupy only finitely many sites of a lattice
randomly. Thus we are dealing with a preparation device that prepares for each 
finite atomic ensemble $X\subset \Lambda$ a state $\omega_X$ with a certain
probability $\mu(X)$. The probability distribution $\mu:X\mapsto \mu(X)$ is nothing
else but a classical state on the system of finite subsets
$X\subset \Lambda$. The corresponding observable algebra consists of all bounded
complex valued functions $f:\Lambda\supset X\mapsto f(X)$. The expectation
value of an observable $f$ for the state $\mu$  is then  
given by $\mu(f)=\sum_{X\subset\Lambda}\mu(X) f(X)$. 

A general classical state
on the system of finite subsets $X\subset\Lambda$ is a complex valued linear
functional on the algebra of  bounded complex valued functions 
$f:X\mapsto f(X)$ such that the following holds:
\begin{itemize}
\item
Positivity: $\eta(f)\geq 0$ for each $f\geq 0$. 
\item
Normalization: $\eta(\11)=1$.
\end{itemize}

A preparation device that produces asymptotically large atomic ensembles has
the property that, in the limit $X\to\Lambda$, the probability that only a
finite number of lattice sites are occupied is vanishing. This corresponds to
classical states $\eta$ with the following property:   
\begin{itemize}
\item
$\eta(f)=0$ if $\lim_{X\to\Lambda}f(X)=0$.
\end{itemize}
A state $\eta$ with this property is called a ``limit point''. To justify this notion,
suppose that limit $\lim_X f(X)=c$ exists. In this case
$\lim_X(f(X)-c)=0$, and $\eta(f-c\11)=\eta(f)-c=0$
follows, which means that there expectation value $\eta(f)=c$ coincides for
all limit points $\eta$. 

What can we say about the limit points of the induced net
$(\hat\omega_X)_{X\subset\Lambda}$ for a state $\omega_\Lambda$ 
that have weakly $\sqrt{\rm n}$-fluctuations? To each operator
$A\in\9T(\6A,*)$ of the tensor algebra, we assign a bounded function
which is given by $X¸\mapsto \hat\omega_X(A)$ \footnote{This function is
indeed bounded which can be verified as follows: The operator $A$ can be
written as a finite direct sum $\bigoplus_{k=0}^nA_k$ with $A_k\in\6A^{\otimes
  k}$.  Since $\omega_\Lambda$ has weakly $\sqrt{\rm n}$-fluctuations we obtain
that $|\hat\omega_X(A)|\leq \sum_{k=0}^n|\hat\omega_X(A_n)|\leq \sum_{k=0}^n
C_n \|A_n\|<\infty$ with $C_n=\sup_{X\subset\Lambda}\nu_n(\hat\omega_X)$.}.  
Now, each limit point $\eta$ induces a state on the tensor
algebra by $\omega_\eta(A)=\eta(X\mapsto \hat\omega_X(A))$. Here we use the
suggestive notation $\eta(X\mapsto f(X)):=\eta(f)$ to represent an expectation value.    

The states $\hat\omega_\eta$ describe the mean-field fluctuations
of asymptotically large
atomic ensembles. The next propositions states that these mean-field
fluctuations behave like bosonic modes. Consider a state $\omega_\Lambda$ that
is single site homogenous with single site restriction
$\omega=\omega_\Lambda\circ \iota_x$. 
Then there is a natural antisymmetric 
hermitian form $\gamma(a,b):=\omega([a^*,b])$ on the observable algebra $\6A$ of the
single site system. The ideal $\9J_{\gamma}$, which represents the canonical
commutation relations, is generated by the operators  
$I_\gamma(a,b):=[\Phi(a),\Phi(b)]-\gamma(a^*,b)\I$. 

\begin{prop}
\label{prop-bosonic}
Let $\omega_\Lambda$ be a single site homogenous state having weakly $\sqrt{\rm n}$-fluctuations.
Then for each limit point $\eta$, the state $\hat\omega_\eta$ annihilates the
ideal $\9J_{\gamma}$ and can uniquely be lifted to a state on the
corresponding CCR algebra.
\end{prop}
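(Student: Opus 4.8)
The plan is to show that $\hat\omega_\eta$ vanishes on the whole ideal $\9J_\gamma$. Since $\9J_\gamma$ is the two-sided ideal generated by the elements $I_\gamma(a,b)=[\Phi(a),\Phi(b)]-\gamma(a^*,b)\11$, every element of $\9J_\gamma$ is a finite linear combination of products $B\,I_\gamma(a,b)\,C$ with $B,C\in\9T(\6A,*)$, and by linearity it suffices to treat one such product with $B=\Phi(a_1)\cdots\Phi(a_k)$ and $C=\Phi(b_1)\cdots\Phi(b_m)$ monomials. Because $\eta$ is a limit point, $\hat\omega_\eta(B\,I_\gamma(a,b)\,C)=\eta\big(X\mapsto\hat\omega_X(B\,I_\gamma(a,b)\,C)\big)$ vanishes as soon as the function $X\mapsto\hat\omega_X(B\,I_\gamma(a,b)\,C)$ tends to $0$ in the limit $X\to\Lambda$. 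The whole argument thus reduces to establishing this decay.

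The heart of the matter is the image of $I_\gamma(a,b)$ under the $*$-representation $\pi_X$ determined by $\Phi(a)\mapsto\Phi_{\omega_X}(a)$. First I would compute the commutator of two fluctuation operators: the $c$-number shifts drop out, leaving $[\Phi_{\omega_X}(a),\Phi_{\omega_X}(b)]=|X|^{-1}\sum_{x,y\in X}[\iota_x a,\iota_y b]$. Here the tensor-product structure of $\6A(X)$ is decisive, since operators localized at distinct sites commute; all off-diagonal terms vanish and only $x=y$ survives, giving $|X|^{-1}\sum_{x\in X}\iota_x([a,b])$. Single-site homogeneity then fixes the subtracted scalar exactly, as $\omega_X(\iota_x([a,b]))=\omega([a,b])=\gamma(a^*,b)$ for every $x$; consequently the image of the generator is $|X|^{-1}\sum_{x\in X}[\iota_x([a,b])-\omega([a,b])\11]=|X|^{-1/2}\Phi_{\omega_X}([a,b])$. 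The point is that the naively $O(1)$ commutator, after subtracting the correct CCR constant, becomes an honest fluctuation operator carrying one extra factor $|X|^{-1/2}$.

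With this identification the estimate is immediate: inserting the image into the monomials gives $\hat\omega_X(B\,I_\gamma(a,b)\,C)=|X|^{-1/2}\,\hat\omega_X(a_1\otimes\cdots\otimes a_k\otimes[a,b]\otimes b_1\otimes\cdots\otimes b_m)$, and the weakly $\sqrt{\rm n}$-fluctuation hypothesis bounds this $(k{+}m{+}1)$-point function uniformly in $X$ by $C_{k+m+1}\|a_1\|\cdots\|[a,b]\|\cdots\|b_m\|$. Hence $|\hat\omega_X(B\,I_\gamma(a,b)\,C)|=O(|X|^{-1/2})\to0$, and the limit-point property yields $\hat\omega_\eta(B\,I_\gamma(a,b)\,C)=0$, so $\hat\omega_\eta$ annihilates $\9J_\gamma$. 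Finally $\hat\omega_\eta$ is positive and normalized, since $\hat\omega_\eta(A^*A)=\eta\big(X\mapsto\omega_X(\pi_X(A)^*\pi_X(A))\big)\geq0$ because each $\omega_X$ is a state and $\pi_X$ preserves adjoints, while $\hat\omega_\eta(\11)=\eta(\11)=1$; a positive normalized functional vanishing on the two-sided ideal $\9J_\gamma$ therefore descends to the quotient $\9T(\6A,*)/\9J_\gamma=\8{CCR}(\6A,*,\gamma)$, and this lift is unique because the quotient map is surjective. I expect the only genuinely delicate step to be the interchange of the commutator with the site-sum together with the verification that the cross terms vanish and that the subtracted constant matches $\gamma(a^*,b)$ exactly; everything after that is bookkeeping controlled by the uniform semi-norm bound.
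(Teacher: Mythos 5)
Your proof is correct and takes essentially the same route as the paper: your key identity $\pi_X(I_\gamma(a,b))=|X|^{-1/2}\Phi_{\omega_X}([a,b])$, combined with the uniform semi-norm bound $C_{k+m+1}$ from the weakly $\sqrt{\rm n}$-fluctuation hypothesis, is precisely the content of the paper's Lemma~\ref{lem-00}, and the passage from the $O(|X|^{-1/2})$ decay to $\hat\omega_\eta(\9J_\gamma)=0$ via the limit-point property is the paper's argument for Proposition~\ref{prop-bosonic} verbatim. The only difference is that you also spell out the positivity, normalization, and uniqueness of the lift to the CCR algebra, which the paper leaves implicit.
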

\begin{proof}
By Lemma~\ref{lem-00} of the appendix, we conclude that $\lim_X\hat\omega_X(A)=0$ for each
operator in the ideal $\9J_{\gamma}$. This implies
$\hat\omega_\eta(A)=\eta(X\mapsto\hat\omega_X(A))=0$ which implies that
$\hat\omega_\eta$ annihilates $\9J_\gamma$.
\end{proof}
  
\begin{rem}\em
Proposition~\ref{prop-bosonic} can be interpreted, at least to a certain extend, 
by saying that a state $\omega_\Lambda$ of a large atomic ensembles
with weakly $\sqrt{n}$-fluctuations possess bosonic mean-field
fluctuations. This is justified by the fact that each limit point
$\hat\omega_\eta$ is a state on the CCR algebra $\8{CCR}(\6A,*,\gamma)$ which
describes a bosonic system. On the other hand, the CCR algebra is an algebra of
unbounded operators and it might happen that the GNS representation associated
to a limit state $\hat\omega_\eta$ has ``exotic'' properties. Recall, that the GNS
representation is given by a Hilbert space $\2H$ an algebra homomorphism $\pi$
that assigns to each operator $A$ in the CCR algebra a linear (unbounded)
operator on $\2H$ as well as a normalized vector $\Omega\in\2H$ such that
$\omega_\eta(A)=\langle \Omega, \pi(A)\Omega\rangle$. The question that arises
here is whether it is possible to build the exponential
$\exp(\8i\pi(\Phi(a)))$ of a field operator $\pi(\Phi(a))$ in the
representation $\pi$, where $a=a^*$ is selfadjoint. If we can do this, then we
obtain a representation of the Weyl algebra by bounded operators. If it is not
possible to build the exponential we are dealing with an 
``exotic'' case (see e.g. \cite{RESI1}). In order to exclude this kind of
pathologies, we need to consider more specific examples of states.
\end{rem}

\subsection{States with exponential clustering}
A state $\omega_\Lambda$ on $\6A(\Lambda)$ has exponential clustering (with
respect to $d$) if for local operators $A\in\6A(X)$ and $B\in\6A(Y)$ the identity  
\begin{equation}
\omega_\Lambda(AB)=\omega_\Lambda(A)\omega_\Lambda(B)+G_{(X,Y)}(A,B)\8e^{-d(X,Y)}
\end{equation}
is valid for a bounded bilinear function $G_{(X,Y)}:\6A(X)\times\6A(Y)\to\7C$ such that
$|G_{(X,Y)}(A,B)|\leq G_0 \|A\|\|B\|$ for all $A,B$, for all finite regions $X,Y$. Here $G_0$ is
a constant that is independent of the localization regions. The bilinear forms
$G_{X,Y}$ express locally the deviations from the state to a product state,
being scaled with the exponential of the distance. Therefore $G_{(X,Y)}$
indicates the presence of correlations that are exponentially decreasing with
the distance. To give a name, we call the family of bilinear maps
$G=(G_{(X,Y)})_{X,Y\subset \Lambda}$ the {\em correlators}. 
Note that, equivalently, exponential clustering is given by the condition
\begin{equation}
|\omega_\Lambda(AB)-\omega_\Lambda(A)\omega_\Lambda(B)|\leq G_0 \ \8e^{-d(X,Y)}
\end{equation}
for all $A\in\6A(X),B\in\6A(Y)$. Here $d(X,Y)=\min_{x\in X,y\in Y}d(x,y)$ is
the distance between the finite subsets $X,Y\subset\Lambda$. We always require
here, that the distance $d$ is regular, i.e. the maximal number $N(r)$ of lattice sites
within a ball of radius $r$ is bounded by a polynomial.

The exponential clustering property can be used to derive a useful cluster expansion
in terms of expectation values of the single site restriction $\omega$ and the correlators
$G$. In order to write down this expansion, we introduce the following
objects:
\begin{itemize}
\item
For each finite subset $Y\subset \Lambda$ we introduce the
``the spread'' $\Delta(Y):=\max_{y\in Y}d(y,Y\setminus y)$ which measures the
maximal distance of a point to its relative complement in $Y$. 

\item
An $k$-elementary subset
$\{y_1,\cdots,y_k\}\subset  X$ is called ``spread optimally enumerated'' if the enumeration
fulfills the condition
$d(y_l,\{y_{l+1},\cdots,y_k\})=\Delta(y_l,\cdots,y_k)$ for all
$l=1,\cdots,k-1$. Note that each subset can be spread optimally
enumerated. 

\item
Given a tuple $x\in X^n$, we choose a spread optimal enumeration
of the range $\8{Ran}(x)=\{y_1,\cdots,y_{|\8{Ran}(x)|}\}$ and we consider the
correlators $G^x_k:=G_{(y_k,\{y_{k+1},\cdots,y_{|\8{Ran}(x)|}\})}$ which test
the correlations for splitting the site $y_k$ from the remaining points
$\{y_{k+1},\cdots,y_{|\8{Ran}(x)|}\}$, where $k=1,\cdots,|\8{Ran}(x)|-1$. 

\item 
For a family of operators $a_1,\cdots,a_n\in\6A$ and a tuple $x\in X^n$ whose
range $\{y_{1},\cdots,y_{|\8{Ran}(x)|}\}$  is spread optimally enumerated, we
introduce the single site ``cluster operators'' $a^x_{k}\in\6A$ which are  given by the ordered
product $a_k^x:=\prod_{j\in x^{-1}(y_k)}a_j$, where the ordering is according
to the value of the index in $x^{-1}(y_k)=\{j=1,\cdots,n|x_j=y_k\}$.
\end{itemize}

The following theorem, whose proof is given in the appendix, states that
correlation functions of the induced states $\hat\omega_X$ admit a cluster
expansion in terms of the single site restriction $\omega$, the correlators
$G$ and cluster operators $a_k^x$: 
 
\begin{thm}[Cluster expansion]
\label{prop-expansion}
Let $\omega_\Lambda$ be a single site homogenous state with single site
restriction $\omega$ and exponential
clustering with respect to $d$. For each $a_1,\cdots,a_n\in\8{ker}(\omega)$
and for each finite subset $X\subset \Lambda$  the $n$-point correlation 
function of the induced state $\hat\omega_X$ can be written as
\begin{equation}
\label{equ-expansion}
\begin{split}
&\hat\omega_X(a_1\otimes\cdots\otimes a_n)=\hat\omega^{\otimes
  X}(a_1\otimes\cdots\otimes a_n)+F_X(a_1\otimes\cdots\otimes a_n) \; .
\end{split}
\end{equation}
where the correlation function of the induces homogenous product state
$\hat\omega^{\otimes X}$ and the functional $F_X$ are given by 
\begin{equation}
\label{eq:1}
\begin{split}
\hat\omega^{\otimes X}(a_1\otimes\cdots\otimes a_n)
&=|X|^{-\frac{n}{2}}\sum_{x\in X^n}\omega(a_{1}^x)\cdots\omega(a_{|\8{Ran}(x)|}^x)
\\
F_X(a_1\otimes\cdots\otimes a_n)
&=|X|^{-\frac{n}{2}}\sum_{x\in X^n}\sum_{k=1}^{|\8{Ran}(x)|-1} \omega(a_{1}^x)\cdots\omega(a_{k-1}^x)
\\
&\times \ \  G^x_k(a^x_k,a^x_{k+1}\cdots
a^x_{|\8{Ran}(x)|})\8e^{-\Delta(y_k,y_{k+1},\cdots,y_{|\8{Ran}(x)|})} \; ,
\end{split}
\end{equation}
where for each $x\in X^n$ the range $\8{Ran}(x)$ is spread optimally enumerated.
\end{thm}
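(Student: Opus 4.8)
The plan is to reduce the $n$-point function of $\hat\omega_X$ to a sum over index tuples $x\in X^n$, to group the operator factors site by site into the cluster operators $a_k^x$, and then to apply the exponential clustering property repeatedly in order to peel off one lattice site at a time. Since the whole statement concerns a fixed finite region $X$, no limit is involved; the identity \eqref{equ-expansion} is a purely algebraic consequence of the definitions together with clustering.

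First I would simplify the fluctuation operators. Because $\omega_\Lambda$ is single site homogeneous we have $\omega_X(\iota_x a_j)=\omega(a_j)$, and since $a_j\in\8{ker}(\omega)$ this expectation vanishes. Hence the subtraction in the definition of $\Phi_{\omega_X}$ drops out and $\Phi_{\omega_X}(a_j)=|X|^{-1/2}\sum_{x\in X}\iota_x a_j$. Expanding the product of the $n$ fluctuation operators and using $\hat\omega_X(a_1\otimes\cdots\otimes a_n)=\omega_X(\Phi_{\omega_X}(a_1)\cdots\Phi_{\omega_X}(a_n))$ yields
\begin{equation}
\hat\omega_X(a_1\otimes\cdots\otimes a_n)=\frac{1}{|X|^{n/2}}\sum_{x\in X^n}\omega_X\Big(\prod_{j=1}^n\iota_{x_j}a_j\Big),
\end{equation}
the product being taken in the index order $j=1,\dots,n$.

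Next, for a fixed tuple $x$ I would reorganize the product by lattice site. Operators localized at distinct sites commute, whereas operators sharing a site must retain their relative order; collecting the latter into the cluster operators gives $\prod_{j}\iota_{x_j}a_j=\prod_{k=1}^{|\8{Ran}(x)|}\iota_{y_k}(a_k^x)$, where $\8{Ran}(x)=\{y_1,\dots,y_{|\8{Ran}(x)|}\}$ is spread optimally enumerated and $a_k^x$ is exactly the cluster operator defined before the theorem. The heart of the proof is then a telescoping application of clustering. Splitting off the first site by setting $A=\iota_{y_1}(a_1^x)$ and $B=\prod_{k\geq 2}\iota_{y_k}(a_k^x)$, and using $\omega_X=\omega_\Lambda$ on $\6A(X)$ together with $\omega_\Lambda(A)=\omega(a_1^x)$, the clustering identity gives the product term $\omega(a_1^x)\,\omega_\Lambda(B)$ plus a correction $G^x_1(a_1^x,a_2^x\cdots a_{|\8{Ran}(x)|}^x)\,\8e^{-d(y_1,\{y_2,\dots,y_{|\8{Ran}(x)|}\})}$. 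At this point I would invoke the defining property of the spread optimal enumeration, $d(y_k,\{y_{k+1},\dots,y_{|\8{Ran}(x)|}\})=\Delta(y_k,\dots,y_{|\8{Ran}(x)|})$, so the exponential weight becomes precisely $\8e^{-\Delta(y_1,\dots,y_{|\8{Ran}(x)|})}$. Iterating the same splitting on $\omega_\Lambda(B)$, peeling off $y_2$, then $y_3$, and so on down to $y_{|\8{Ran}(x)|}$, produces
\begin{equation}
\omega_\Lambda\Big(\prod_{k}\iota_{y_k}(a_k^x)\Big)=\prod_{k}\omega(a_k^x)+\sum_{k=1}^{|\8{Ran}(x)|-1}\Big(\prod_{l<k}\omega(a_l^x)\Big)G^x_k(a^x_k,a^x_{k+1}\cdots a^x_{|\8{Ran}(x)|})\,\8e^{-\Delta(y_k,\dots,y_{|\8{Ran}(x)|})}.
\end{equation}

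Summing this identity over all $x\in X^n$ and dividing by $|X|^{n/2}$ then splits $\hat\omega_X$ into exactly the two announced pieces: the leading products $\prod_k\omega(a_k^x)$ assemble into $\hat\omega^{\otimes X}$, the value the induced state of a homogeneous product state would take, while the remainder terms assemble into $F_X$, reproducing \eqref{eq:1} term by term. I expect the only real obstacle to be bookkeeping: one must check that at each stage of the recursion the factored-off operator is supported on the single site $y_k$ while the remainder is supported on $\{y_{k+1},\dots,y_{|\8{Ran}(x)|}\}$, and that the spread optimal enumeration guarantees their minimal distance is $\Delta(y_k,\dots,y_{|\8{Ran}(x)|})$. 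This is precisely the step that ties the geometric factor $\8e^{-d}$ of the clustering hypothesis to the combinatorial weight in the statement, and keeping the ordering of the cluster operators consistent with the noncommutativity within a single site is the one place where genuine care is required.
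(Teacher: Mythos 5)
Your proposal is correct and follows essentially the same route as the paper: reduce $\hat\omega_X$ to a sum over tuples $x\in X^n$ (using $a_j\in\8{ker}(\omega)$ to kill the subtraction in the fluctuation operators), group factors into the cluster operators $a_k^x$, and peel off sites one at a time via the clustering identity, with spread optimal enumeration converting each distance factor $\8e^{-d(y_k,\{y_{k+1},\dots\})}$ into $\8e^{-\Delta(y_k,\dots,y_{|\8{Ran}(x)|})}$. The only organizational difference is that the paper packages your telescoping iteration as a separate induction (Lemma~\ref{lem-expansion}) before summing over tuples, which is mathematically the same argument.
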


As the cluster expansion is stated above, it holds for all correlation
functions for which the operators $a_1,\cdots,a_n$ are chosen in the kernel
$\8{ker}(\omega)$ of the single site restriction. If this is not the case, we
can express the correlation function in terms of $a_i=a_i'+\omega(a_i)\11$ where
$a_i'\in\8{ker}(\omega)$. The tensor product $a_1\otimes\cdots \otimes a_n$
can be expanded in terms of the operators $a_i'\in \8{ker}(\omega)$  according to 
\begin{equation}
\label{eq:3}
a_1\otimes\cdots \otimes a_n=\sum_{J\subset \{1,\cdots, n\}} \bigotimes_{i\in J}
a'_{i} \prod_{j\in \{1,\cdots, n\}\setminus J} \omega(a_j) 
\end{equation}
where the sum runs over all ordered subsets.  To get the general 
cluster expansion for the full tensor algebra, one only has
to apply  Theorem~\ref{prop-expansion} to (\ref{eq:3}) for each summand. 

It is known, that the induced net $(\hat\omega^{\otimes X})_{X\subset\Lambda}$
converges weakly to a quasi-free state. 
We show here a slightly stronger result:

\begin{prop}
\label{prop-prod}
A homogenous product state $\omega^{\otimes\Lambda}$ has strongly
$\sqrt{\rm n}$-fluctuations. In particular, the induced net 
$(\hat\omega^{\otimes X})_{X\subset\Lambda}$ converges strongly to the quasi-free state 
$\hat\omega_{\8{qf}}$ whose covariance is given by the truncated two-point
function $W(a,b)=\omega(ab)-\omega(a)\omega(b)$. 
\end{prop}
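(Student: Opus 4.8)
The plan is to reduce the statement to the explicit expansion of the induced $n$-point functions and then extract a single leading term by a uniform combinatorial estimate. First I would center the operators: since $\Phi_{\omega_X}(a)=\Phi_{\omega_X}(a-\omega(a)\11)$ by construction, and since the covariance obeys
\[
W(a,b)=\omega(ab)-\omega(a)\omega(b)=\omega\big((a-\omega(a)\11)(b-\omega(b)\11)\big),
\]
both $\hat\omega_X$ and the quasi-free state $\hat\omega_{\8{qf}}$ are unchanged when every $a_i$ is replaced by $a_i':=a_i-\omega(a_i)\11\in\8{ker}(\omega)$. Hence, to bound the seminorm $\nu_n(\hat\omega_X-\hat\omega_{\8{qf}})$ it suffices to estimate on kernel operators, noting $\|a_i'\|\le 2$ whenever $\|a_i\|\le 1$.

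Next I would expand. For a homogenous product state the restricted state factorizes over sites, so
\[
\hat\omega_X(a_1'\otimes\cdots\otimes a_n')=|X|^{-n/2}\sum_{x\in X^n}\prod_{y}\omega\Big(\prod_{i:\,x_i=y}a_i'\Big),
\]
the outer product running over the sites $y$ occupied by the tuple $x=(x_1,\dots,x_n)$ and the inner product ordered by the index $i$ (this is exactly the leading term of Theorem~\ref{prop-expansion}, the correlators being absent for a product state). I would organize the sum according to the set partition $\pi$ of $\{1,\dots,n\}$ that $x$ induces, grouping indices sent to a common site. Because $a_i'\in\8{ker}(\omega)$, any block of size one contributes a factor $\omega(a_i')=0$; only partitions whose blocks all have size $\ge 2$ survive. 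For such a $\pi$ with $|\pi|$ blocks the number of tuples inducing it is the falling factorial $|X|(|X|-1)\cdots(|X|-|\pi|+1)$, and the operator value $\prod_{B\in\pi}\omega\big(\prod_{i\in B}a_i'\big)$ depends on $\pi$ only.

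The constraint of blocks of size $\ge 2$ forces $|\pi|\le n/2$, with equality exactly for pairings $P\in\Pi_2(n)$ (so $n$ must be even). For a pairing the prefactor
\[
|X|^{-n/2}\,|X|(|X|-1)\cdots(|X|-n/2+1)=\prod_{j=0}^{n/2-1}\Big(1-\frac{j}{|X|}\Big)=1+O(|X|^{-1})
\]
tends to $1$, and the value $\prod_{(i,j)\in P}\omega(a_i'a_j')=\prod_{(i,j)\in P}W(a_i,a_j)$ is, by \eqref{eq:2}, precisely a term of $\hat\omega_{\8{qf}}$; while for every partition with $|\pi|<n/2$ the prefactor is bounded by $|X|^{|\pi|-n/2}\le|X|^{-1/2}$. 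Since there are only finitely many partitions of $\{1,\dots,n\}$ and each block expectation is bounded by $\prod_{i\in B}\|a_i'\|\le 2^{|B|}$, I would collect these bounds into a uniform estimate
\[
\nu_n(\hat\omega_X-\hat\omega_{\8{qf}})\le C_n\,|X|^{-1/2},
\]
with $C_n$ depending only on $n$. Letting $X\to\Lambda$ then shows $\nu_n(\hat\omega_X-\hat\omega_{\8{qf}})\to 0$ for every $n$, i.e. strong convergence, so $\omega^{\otimes\Lambda}$ has strongly $\sqrt{\rm n}$-fluctuations with limit $\hat\omega_{\8{qf}}$.

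The hard part is not the pointwise limit but the \emph{uniformity} required by the strong topology: the bound on $\nu_n(\hat\omega_X-\hat\omega_{\8{qf}})$ must hold simultaneously for all unit operators $a_1,\dots,a_n$. What makes this tractable is that the combinatorial data (the set of admissible partitions and their falling-factorial counts) is completely independent of the chosen operators, while the operator-dependent factors are controlled uniformly by the C*-estimate $|\omega(\prod_{i\in B}a_i')|\le\prod_{i\in B}\|a_i'\|$; this decoupling produces the $X$-uniform constant $C_n$. A minor point I would check explicitly is the ordering: within each block the product is taken in the original index order, so a pair $\{i,j\}$ with $i<j$ yields $\omega(a_i'a_j')=W(a_i,a_j)$, matching the convention for $\Pi_2(n)$ and for $\omega_W$ in \eqref{eq:2}.
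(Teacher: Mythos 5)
Your proof is correct and follows essentially the same route as the paper's: expand the induced $n$-point function over tuples in $X^n$, organize by the induced set partitions, discard singleton blocks (kernel operators), recognize the pairings as the quasi-free term with falling-factorial prefactor $\prod_{j=0}^{n/2-1}(1-j/|X|)\to 1$, and bound all remaining partitions by $O(|X|^{-1/2})$ uniformly over unit operators. The only (harmless) deviation is in the reduction to kernel operators: you exploit the invariance of $\Phi_{\omega_X}$ and of $W$ under the centering $a\mapsto a-\omega(a)\11$, whereas the paper invokes its general topology-equivalence Lemma~\ref{lem:topol}; both yield the needed passage from $\nu_n^\omega$ to $\nu_n$.
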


Homogenous product states are the simplest among states that have exponential
clustering. For the general case, the following is true:

\begin{thm}
\label{thm-main}
Each single site homogenous state with exponential clustering has weakly 
$\sqrt{\rm n}$-fluctuations. 
\end{thm}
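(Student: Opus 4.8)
The plan is to show, for each fixed $n$, that the semi-norm $\nu_n(\hat\omega_X)$ stays bounded as $X\to\Lambda$, which is precisely the definition of weakly $\sqrt{\rm n}$-fluctuations. The first reduction is to test operators in the kernel of $\omega$. Writing $a_i=a_i'+\omega(a_i)\11$ with $a_i'\in\8{ker}(\omega)$ and $\|a_i'\|\le 2$, and expanding the tensor product as in (\ref{eq:3}), every $n$-point function $\hat\omega_X(a_1\otimes\cdots\otimes a_n)$ decomposes into a finite sum over subsets $J\subseteq\{1,\dots,n\}$ of terms $\bigl(\prod_{j\notin J}\omega(a_j)\bigr)\,\hat\omega_X\bigl(\bigotimes_{i\in J}a_i'\bigr)$ whose arguments lie in $\8{ker}(\omega)$. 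Since $|\omega(a_j)|\le 1$ and there are only $2^n$ subsets, it suffices to bound $|\hat\omega_X(b_1\otimes\cdots\otimes b_n)|$ uniformly in $X$ for all $b_i\in\8{ker}(\omega)$ with $\|b_i\|\le 1$ (and for all lengths up to $n$, which is the same statement).

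For such operators I would invoke the cluster expansion of Theorem~\ref{prop-expansion}, $\hat\omega_X(b_1\otimes\cdots\otimes b_n)=\hat\omega^{\otimes X}(b_1\otimes\cdots\otimes b_n)+F_X(b_1\otimes\cdots\otimes b_n)$. The product part $\hat\omega^{\otimes X}$ is already uniformly bounded by Proposition~\ref{prop-prod}, since a homogeneous product state even has strongly $\sqrt{\rm n}$-fluctuations; hence everything reduces to a uniform bound on $F_X$. Bounding the remainder term by term with $|\omega(b^x_j)|\le 1$ and $|G^x_k|\le G_0$, and writing $m=|\8{Ran}(x)|$, the crucial observation is that because $b_i\in\8{ker}(\omega)$ the factor $\omega(b^x_j)$ vanishes unless the site $y_j$ carries at least two of the indices. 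Thus in the $k$-th summand the prefactor $\omega(b^x_1)\cdots\omega(b^x_{k-1})$ forces the sites $y_1,\dots,y_{k-1}$ to be ``paired'' (at least two indices each), while the correlator sites $y_k,\dots,y_m$ carry at least one index each. Counting indices gives the constraint $n\ge 2(k-1)+(m-k+1)=k+m-1$.

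After this, I would organize the sum by combinatorial type (there are at most $c_n$ assignments of indices to the ordinal sites $1,\dots,m$), and for each type sum over positions. The positions $y_1,\dots,y_{k-1}$ are free, contributing at most $|X|^{k-1}$, and the positions $y_k,\dots,y_m$ are controlled by the decay factor $\8e^{-\Delta(y_k,\dots,y_m)}$ through the key estimate
\[
\sum_{z_1,\dots,z_q\in X\ \text{distinct}}\8e^{-\Delta(z_1,\dots,z_q)}\ \le\ C_q\,|X|^{q/2},\qquad q=m-k+1 .
\]
Granting this, the $k$-th type contributes at most $c_n\,G_0\,|X|^{-n/2}\,|X|^{k-1}\,C_q\,|X|^{(m-k+1)/2}$, whose exponent $-\tfrac n2+(k-1)+\tfrac{m-k+1}{2}$ is $\le 0$ precisely by the index constraint $n\ge k+m-1$; summing over the finitely many $k$, $m$ and types yields the desired uniform bound.

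The heart of the argument, and the step I expect to be the main obstacle, is the summation estimate displayed above; note that the exponent balances to exactly $0$, which is why the $\sqrt{\rm n}$-scaling is the correct one. A naive treatment that uses the single exponential only to sum over the most isolated point $y_k$ and leaves $y_{k+1},\dots,y_m$ entirely free overcounts by a factor $|X|^{q/2}$ and diverges. The point is rather that $\Delta(z_1,\dots,z_q)$ is the maximal isolation $\max_i d(z_i,\{z_1,\dots,z_q\}\setminus z_i)$, so a configuration with $\Delta<t$ has \emph{no} isolated point: the graph joining points at distance $<t$ has minimal degree at least one and therefore decomposes into at most $\lfloor q/2\rfloor$ connected components of size $\ge2$. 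Choosing one anchor per component costs at most $|X|^{\lfloor q/2\rfloor}\le|X|^{q/2}$ positions, while the remaining points of each component lie in a ball of radius $\le q\,t$ about its anchor, costing at most $N(qt)^{q}$ further choices. Summing over the integer scales $t$ and using the regularity of $d$ — the polynomial bound on $N(r)$ — makes $\sum_t \8e^{-t}N(qt)^{q}$ convergent, which gives the estimate with a finite constant $C_q$ independent of $X$.
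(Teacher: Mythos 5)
Your proposal is correct, and its skeleton coincides with the paper's own proof: reduction to test operators in $\8{ker}(\omega)$ (the paper does this via Lemma~\ref{lem:topol} rather than equation~(\ref{eq:3}), but it is the same device), the cluster expansion of Theorem~\ref{prop-expansion}, Proposition~\ref{prop-prod} to dispose of the product part, the observation that $\omega(a^x_j)=0$ unless the site $y_j$ is multiply occupied (this is exactly the paper's constraint $\delta(l+k-1\leq n)$ in Lemma~\ref{lem-02}), and the exponent balance $-\tfrac n2+(k-1)+\tfrac{m-k+1}{2}\leq 0$. Where you genuinely depart from the paper is the proof of the crucial summation estimate $\sum_{\rm distinct}\8e^{-\Delta(z_1,\cdots,z_q)}\leq C_q|X|^{q/2}$, which you correctly identify as the heart of the matter. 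The paper reduces it to bounding the number $N(X,q,r)$ of $q$-elementary subsets of spread at most $r$, and proves $N(X,q,r)\leq q_{[q]}|X|^{q/2}N(r)^{q/2}$ (Lemma~\ref{lem-01}) by a recursion in $q$: a set of spread $\leq r$ can be peeled either by removing a close pair, when its $2$-spread exceeds $r$ (Lemma~\ref{lemma-tech-0}), or by removing a single point (Lemma~\ref{lemma-tech-1}), which yields $N(X,q+2,r)\leq (q+1)N(r)N(X,q+1,r)+|X|N(r)N(X,q,r)$ and then the ansatz with the coefficients $q_{[q]}$. Your argument is direct and non-recursive: spread $<t$ means no point is isolated at scale $t$, so the proximity graph has minimal degree at least one, hence at most $\lfloor q/2\rfloor$ connected components, each of diameter $<qt$; choosing one free anchor per component costs at most $|X|^{q/2}$, the remaining points lie in balls of radius $qt$ costing at most $N(qt)^q$, and $\sum_t\8e^{-t}N(qt)^q<\infty$ by regularity of $d$. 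Both arguments are valid and give the same $|X|^{q/2}$ scaling. Yours is shorter and replaces the paper's three counting lemmas by one self-contained step; the paper's recursion keeps the sharper factor $N(r)^{q/2}$ in place of your $N(qt)^q$ inside the convergent sum, but since $N$ is only assumed polynomially bounded this difference is immaterial for the theorem.
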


The proof of the theorem is quite technical and therefore postponed to the
appendix. However, it takes advantage of the cluster expansion of $F_X$ into single
site expectation values and correlators. The basic idea to get a uniform bound
for the semi norms $\nu_n(F_X)$ is to count the number of terms that are
contributing to the cluster expansion. In total, we sum over all tuples in $X^n$ which
gives $|X|^n$ terms. Since we normalize by multiplying $|X|^{-n/2}$, a naive
counting would give the non-uniform bound $\nu_n(F_X)\leq \8O(|X|^{-n/2})$. By a
more careful analysis, it turns out that effectively only $|X|^{n/2}$ terms
are contributing. By choosing $a_1,\cdots,a_n\in\8{ker}(\omega)$, 
the single site expectation value of a cluster operator 
$\omega(a_k^x)$ is vanishing if $x^{-1}(y_k)=\{j\}$ contains only a single
element. Note that  in this case we just have
$\omega(a^x_k)=\omega(a_j)=0$. This reduces directly the
number of terms which in the cluster expansion (\ref{eq:1}). A large
number contributions are also coming from tuples $x$ with range
$\{y_1,\cdots,y_{|\8{Ran}(x)|}\}$ for which the
spreads $\Delta(y_k,\cdots, y_{|\8{Ran}(x)|})$ are large. These
contributions are also of order $|X|^{n/2}$, since they are suppressed the
exponential damping $\exp(-\Delta(y_k,\cdots, y_{|\8{Ran}(x)|}))$.   

\begin{rem}\em 
We can derive from the cluster expansion that
for asymptotically large atomic ensembles there is a quasi-free part from the
product state contribution and a perturbation which comes from the correlators.   
Namely, for each weak limit point $\eta$ the state
$\hat\omega_\eta$ can be written as
\begin{equation}
\hat\omega_\eta=\hat\omega_{\8{qf}}+F_\eta
\end{equation}
with $F_\eta(A)=\eta(X\mapsto F_X(A))$. The functional $F_\eta$ is a perturbation of the quasi-free
limit state $\hat\omega_{\8{qf}}$ which may depend on the limit functional
$\eta$. Note that Theorem~\ref{thm-main} guarantees the existence of weak
limit points $F_\eta$, 
since $\sup_{X\subset\Lambda}\nu_n(F_{X})<\infty$.
\end{rem}

\section{Conclusion}

We have shown that states of large atomic ensembles whose correlations are exponentially
decaying with the distance between atoms (exponential clustering) possess
bosonic mean-field fluctuations. In addition to that, these states 
are invariant under applications of quantum cellular. 
  
This enables the implementation of the following type of process: The bosonic
modes of a light field are coupled to a large atomic ensemble such that the
Gaussian 
state of the laser field 
is transferred almost perfectly (where the precision is here of order
$\8O(\sqrt{\mbox{number of single atom systems}})$) to a homogenous product state of the atoms. A quantum cellular
automaton acting on the atomic ensemble is implemented. The resulting state of
the atomic ensemble possesses again bosonic mean-field fluctuations and 
the resulting state of the atomic ensemble can be transferred back almost perfectly to  
the bosonic modes of the light field.  

The total process induces an operation on bosonic modes which maps an initial
Gaussian state to some bosonic state, which can be non-Gaussian. With help of
Theorem~\ref{prop-expansion}, the correlation functions of the resulting state
can be written as a perturbation of a Gaussian state. This may be helpful in
oder to decide which states of atomic ensembles correspond to Gaussian states.
 
It is still an open problem to decide in general from the state of the atomic ensemble whether the
resulting induced state is Gaussian or not. Concerning states with
exponential clustering, the cluster
expansion (Theorem~\ref{prop-expansion}) appears to be a reasonable technique in order to address this problem.      
Here the correlation functions of the fluctuation operators can be expanded
into the correlation functions of the homogenous product state
$\omega^{\otimes X}$ (here $\omega$ is the restriction of the global
state to a single atom) and some correction $F_X$. For large atomic ensembles,
the correlation functions of the homogenous product state $\omega^{\otimes X}$
correspond to a Gaussian state, whereas $F_X$ can be regarded as a
``perturbation''. 

Furthermore, it would be desirable to construct new examples of atomic
ensemble states (in particular beyond homogenous product states)  whose
induced net has strongly $\sqrt{\rm n}$-fluctuations or $\sqrt{\rm
  n}$-fluctuations.  

In order to archive more concrete results in this direction, one has to
consider here more concrete examples. One suggestion is to consider ensembles
of two-level atoms arranged in a one-dimensional lattice. A natural class of
states for which mean-field fluctuations can be investigated are stabilizer
states which are invariant under the action of so called Clifford quantum
cellular automata (see \cite{SchlVogtWer08,Schl09,Gutschow:1184639} and references given therein).


\clearpage

\begin{appendix}

\section{On quasi-free states (proof of Proposition~\ref{prop:quasi-free-contin})}
\begin{proof}[ of Proposition~\ref{prop:quasi-free-contin}]
Let $W,W'$ be two bounded covariances (positive bounded bilinear form) on $V$. Recall that a bilinear form $F$
on $V$ is bounded if 
$\|F\|:=\sup_{v_1,v_2\in V_1}|F(v_1,v_2)|<\infty$. Moreover, recall that a valid covariance $W$
has to be positive in the sense that $W(Jv,v)\geq 0$ for all
$v\in V$. We fix vectors $v_1,\cdots,v_n$ and use  
the expression (\ref{eq:2}) for the correlation functions of quasi-free
states to calculate the difference of the $n$-point functions for the
quasi-free states $\omega_W$ and $\omega_{W'}$. For each ordered partition $P\in
\Pi_2(n)$ we choose an enumeration $P=(\{i_1,j_1\},\cdots, \{i_{n/2},j_{n/2}\})$
and we introduce for $l=1,\cdots,n/2$ the quantities
$W_{P,l}:=W(v_{i_l},v_{j_l})$ and  $W'_{P,l}:=W'(v_{i_l},v_{j_l})$. 
This yields for the difference of the corresponding correlation functions:
\begin{equation}
\begin{split}
\omega_W(v_1\otimes\cdots\otimes v_n)
&-\omega_{W'}(v_1\otimes\cdots\otimes v_n)
\\
&=\sum_{P \in \Pi_2(n)} \sum_{l=1}^{n/2}
W_{P,1}\cdots W_{P,l-1}(W_{P,l}-W'_{P,l})W'_{P,l+1}\cdots W'_{P,n/2} \; .
\end{split}
\end{equation}
Here we have used that the difference of products can be written as a
sum in the following way:
\begin{equation}
W_{P,1}\cdots W_{P,n/2}-W'_{P,1}\cdots W'_{P,n/2}=\sum_{l=1}^{n/2}
W_{P,1}\cdots W_{P,l-1}(W_{P,l}-W'_{P,l})W'_{P,l+1}\cdots W'_{P,n/2} \; .
\end{equation}
By using the fact that $W$ and $W'$ are bounded bilinear forms, the modulus
of $W_{P,l}$ (and similarly for $W'$ and $W-W'$) can be bounded by
$|W_{P,l}|\leq \|v_{i_l}\|\|v_{j_l}\|\|W\|$ and we obtain the
desired bound 
\begin{equation}
\begin{split}
|\omega_W(v_1\otimes\cdots\otimes v_n)
&-\omega_{W'}(v_1\otimes\cdots\otimes v_n)|
\\
&\leq  \|v_1\|\cdots\|v_n\| \ \|W-W'\| \ |\Pi_2(n)|
\sum_{l=1}^{n/2}\|W\|^{l-1} \|W'\|^{n/2-l} \; .
\end{split}
\end{equation} 
\end{proof}

\section{Proving the bosonic behavior of mean-field fluctuations for large atomic ensembles}
\label{sec:prov-boson-behav}
Roughly, the statement of Proposition~\ref{prop-bosonic} is that mean-field
fluctuations 
for states having weakly $\sqrt{\rm n}$-fluctuations
behave like a bosonic system for large atomic ensembles.    
The following lemma provides the bounds which are used to prove this statement.   

\begin{lem}
\label{lem-00}
Let $\omega_\Lambda$ be a single site homogenous state on the quasi-local
$\6A(\Lambda)$ 
such that the induced net $(\hat\omega_{X})_{X\subset\Lambda}$  has
weakly $\sqrt{\rm n}$-fluctuations. Then for each family of operators
$a_1,\cdots,a_n$ the bound 
\begin{equation}
\begin{split}
|\hat\omega_X(\Phi(a_1)\cdots\Phi(a_{i-1}) I_{\gamma}(a_i,a_{i+1}) \Phi(a_{i+2})\cdots \Phi(a_n))|
\leq 2|X|^{-1/2} \  C_{n-1}\prod_{i=1}^n\|a_i\| 
\end{split}
\end{equation}
holds with $C_n:=\sup_{X\subset\Lambda}\nu_n(\hat\omega_{X})$.  
\end{lem}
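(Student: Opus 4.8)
The plan is to show that, in the fluctuation representation $\Phi(a)\mapsto\Phi_{\omega_X}(a)$, the ideal generator $I_\gamma(a_i,a_{i+1})$ does not merely ``almost'' vanish but in fact equals $|X|^{-1/2}$ times a single fluctuation operator of the commutator $\CO{a_i}{a_{i+1}}$. Once this is in hand, the whole expression reduces to $|X|^{-1/2}$ times an ordinary $(n-1)$-point function of the induced state $\hat\omega_X$, and the weakly $\sqrt{\rm n}$-fluctuations hypothesis supplies the uniform bound.

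First I would compute the commutator of two represented fluctuation operators. Writing $\bar a_x:=\iota_x a-\omega_X(\iota_x a)\I$ so that $\Phi_{\omega_X}(a)=|X|^{-1/2}\sum_{x\in X}\bar a_x$, the scalar shifts drop out of every commutator and one gets
\begin{equation}
\CO{\Phi_{\omega_X}(a)}{\Phi_{\omega_X}(b)}=\frac{1}{|X|}\sum_{x,y\in X}\CO{\iota_x a}{\iota_y b} \; .
\end{equation}
For $x\neq y$ the factors $\iota_x a$ and $\iota_y b$ sit on distinct tensor factors of $\6A(\Lambda)$ and hence commute, so only the diagonal $x=y$ survives, and since each $\iota_x$ is a homomorphism this leaves $|X|^{-1}\sum_{x\in X}\iota_x\CO{a}{b}$.

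The key step is to read this diagonal sum as a fluctuation operator. Single-site homogeneity gives $\gamma(a^*,b)=\omega(\CO{a}{b})=\omega_X(\iota_x\CO{a}{b})$ for every $x$, so subtracting $\gamma(a^*,b)\I$ centres each summand exactly, and under the representation the generator becomes
\begin{equation}
I_\gamma(a,b)\;\longmapsto\;\frac{1}{|X|}\sum_{x\in X}\big(\iota_x\CO{a}{b}-\omega_X(\iota_x\CO{a}{b})\I\big)=|X|^{-1/2}\Phi_{\omega_X}(\CO{a}{b}) \; .
\end{equation}
Inserting this into the $n$-fold product and using that the representation is multiplicative, the left-hand side of the lemma becomes
\begin{equation}
|X|^{-1/2}\,\big|\hat\omega_X(a_1\otimes\cdots\otimes a_{i-1}\otimes\CO{a_i}{a_{i+1}}\otimes a_{i+2}\otimes\cdots\otimes a_n)\big| \; .
\end{equation}

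Finally I would invoke the semi-norm estimate. By multilinearity $|\hat\omega_X(v_1\otimes\cdots\otimes v_{n-1})|\le\nu_{n-1}(\hat\omega_X)\prod_j\|v_j\|$, and the weakly $\sqrt{\rm n}$-fluctuations assumption makes $C_{n-1}=\sup_{X}\nu_{n-1}(\hat\omega_X)$ finite; combining this with the C*-estimate $\|\CO{a_i}{a_{i+1}}\|\le 2\|a_i\|\|a_{i+1}\|$ yields the claimed bound. The only substantive point is the algebraic collapse of the second step: it is exactly locality (commutativity across different sites) together with homogeneity that converts the putative CCR defect into a $|X|^{-1/2}$-suppressed fluctuation operator, while the factor $2$ is merely the cost of bounding a commutator norm; everything else is bookkeeping.
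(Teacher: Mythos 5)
Your proposal is correct and follows essentially the same route as the paper: both rest on the identity $\pi_{\omega_X}(I_\gamma(a,b))=|X|^{-1/2}\Phi_{\omega_X}([a,b])$, then insert it into the correlation function and invoke the uniform semi-norm bound $C_{n-1}$ together with $\|[a_i,a_{i+1}]\|\leq 2\|a_i\|\|a_{i+1}\|$. The only difference is that the paper asserts this identity directly from the definition of the fluctuation operators, whereas you derive it explicitly from locality (commutativity across distinct sites) and single-site homogeneity — a worthwhile elaboration, but not a different argument.
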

\begin{proof}
The fluctuation operators define a representation
$\pi_{\omega_X}(\Phi(a)):=\Phi_{\omega_X}(a)$ of the tensor algebra
$\9T(\6A,*)$ by operators in $\6A(X)$. The induced state $\hat\omega_X=\omega_X\circ\pi_{\omega_X}$ is
just the pullback of the state $\omega_X$ by the representation $\pi_{\omega_X}$.
According to the definition of fluctuation operators, the identity 
\begin{equation}
\pi_{\omega_X}(I_{\gamma}(a,b))=[\Phi_{\omega_X}(a),\Phi_{\omega_X}(b)]-\gamma(a^*,b)\I=
|X|^{-1/2}\Phi_{\omega_X}([a,b])
\end{equation}
is valid for all single site operators $a,b$. Inserting this identity within
the correlation function for $a_1,\cdots, a_n$ implies 
\begin{equation}
\begin{split}
&|\hat\omega_X(\Phi(a_1)\cdots\Phi(a_{i-1}) I_{\gamma}(a_i,a_{i+1}) \Phi(a_{i+2})\cdots
\Phi(a_n))|
\\
&=|X|^{-1/2}|\hat\omega_X(\Phi(a_1)\cdots\Phi(a_{i-1})\Phi([a_i,a_{i+1}])\Phi(a_{i+2})\cdots
\Phi(a_n))|
\\
&\leq 2|X|^{-1/2} \ C_{n-1} \prod_{i=1}^n\|a_i\| \; .
\end{split}
\end{equation}
which proves the proposition. Recall that we have used that all the semi-norms
$\nu_n$ are uniformly bounded.  
\end{proof}

\section{Cluster expansion for  correlation functions (proof of Theorem~\ref{prop-expansion})}
This subsection provides the proof of the expansion
(Theorem~\ref{prop-expansion}) of correlations functions
for states $\omega_\Lambda$ with exponential clustering. We also assume here
that $\omega_\Lambda$ is single site homogenous with single site restriction $\omega$.   

We first derive here an simpler expansion for correlation functions of
the form $\omega_\Lambda(A_1\cdots A_k)=\omega_Y(A_1\cdots A_k)$, 
where $Y=\{y_1,y_2,\cdots,y_k\}\subset \Lambda$ is spread optimally
enumerated and the operator $A_i\in\6A(y_i)$ is localized  at site
$y_i$. The idea of the expansion is to expresses the expectation value $\omega_Y(A_1\cdots A_k)$
in terms of single site expectation values $\omega(A_j)$ as well as
terms $G_j(A_j,A_{j+1}\cdots A_k):=G_{(\{y_j\},\{y_{j+1},\cdots, y_k\})}(A_j,A_{j+1}\cdots A_k)$ that are
given by the bilinear forms $G_{(X,Y)}$. Recall, that for two operators
$A\in\6A(X)$ and $B\in \6A(Y)$ the exponential clustering can be expressed by
the identity
\begin{equation}
\omega_\Lambda(AB)=\omega_X(A)\omega_Y(B)+ G_{(X,Y)}(A,B)\8e^{-d(X,Y)} \; .
\end{equation} 
We use the following lemma in order to prove Proposition~\ref{prop-expansion}:

\begin{lem}
\label{lem-expansion}
Let $Y=\{y_1,\cdots,y_{k}\}$ be spread optimally enumerated and let
$A_i\in\6A(y_i)$, $i=1,\cdots,k$,  be single site operators. Then the
expectation value $\omega_Y(A_1\cdots A_k)$ can be expressed as
\begin{equation}
\begin{split}
\omega_{Y}(A_1\cdots A_k)&=\omega(A_1)\cdots\omega(A_k)
\\
&+\sum_{l=1}^{k-1} \omega(A_1)\cdots\omega(A_{l-1})
G_l(A_l,A_{l+1}\cdots A_k)\8e^{-\Delta(y_l,\cdots, y_k)}
\end{split}
\end{equation}
where the bilinear form $G_l$ is defined as given above.
\end{lem}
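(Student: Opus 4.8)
The plan is to prove the identity by induction on the number of sites $k$. For the base case $k=1$ the sum over $l$ is empty and the right-hand side is just $\omega(A_1)$, which matches $\omega_Y(A_1)=\omega(A_1)$ by single site homogeneity. This fixes the start of the induction with no further work.

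For the inductive step I would peel off the first factor, writing $A_1\cdots A_k=A_1\cdot(A_2\cdots A_k)$ and regarding $A_1\in\6A(\{y_1\})$ and $A_2\cdots A_k\in\6A(\{y_2,\cdots,y_k\})$. Applying the exponential clustering identity of this appendix with $X=\{y_1\}$ and $Y'=\{y_2,\cdots,y_k\}$, and using single site homogeneity to replace $\omega_{\{y_1\}}(A_1)$ by $\omega(A_1)$, gives
\begin{equation}
\omega_Y(A_1\cdots A_k)=\omega(A_1)\,\omega_{Y'}(A_2\cdots A_k)+G_{(\{y_1\},Y')}(A_1,A_2\cdots A_k)\,\8e^{-d(\{y_1\},Y')} \; .
\end{equation}

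Two observations then close the induction. First, the spread optimal enumeration of $Y$ applied at $l=1$ gives $d(\{y_1\},Y')=d(y_1,\{y_2,\cdots,y_k\})=\Delta(y_1,\cdots,y_k)$, and by definition $G_{(\{y_1\},Y')}=G_1$; hence the second term above is precisely the $l=1$ summand of the claimed expansion. Second, the truncated set $Y'=\{y_2,\cdots,y_k\}$ is again spread optimally enumerated, since its defining conditions $d(y_l,\{y_{l+1},\cdots,y_k\})=\Delta(y_l,\cdots,y_k)$ for $l=2,\cdots,k-1$ form a subfamily of those already holding for $Y$. This licenses applying the inductive hypothesis to $\omega_{Y'}(A_2\cdots A_k)$; multiplying the resulting expansion by $\omega(A_1)$ reproduces the leading product $\omega(A_1)\cdots\omega(A_k)$ together with the $l=2,\cdots,k-1$ summands, and adding back the $l=1$ term yields exactly the asserted formula.

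The computation is essentially bookkeeping, driven entirely by the clustering hypothesis, so I expect no analytic difficulty: the correlator $G$ and the exponential factor are handed to us directly. The one point that genuinely requires care—and is the only place the hypothesis on the enumeration is used—is matching the distance $d(\{y_1\},Y')$ appearing in the clustering split with the spread $\Delta(y_1,\cdots,y_k)$ appearing in the statement, together with the verification that spread optimality descends to $Y'$. Both are immediate from the definition of spread optimal enumeration, so the main obstacle is merely ensuring this indexing is tracked consistently across the inductive step.
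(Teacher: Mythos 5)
Your proof is correct and follows essentially the same route as the paper's: induction on $k$, peeling off $A_1$ via the exponential clustering identity, matching $d(\{y_1\},\{y_2,\cdots,y_k\})$ with $\Delta(y_1,\cdots,y_k)$ through the spread optimal enumeration, and noting that spread optimality descends to $\{y_2,\cdots,y_k\}$. The only cosmetic difference is that you start the induction at $k=1$, whereas the paper also verifies $k=2$ explicitly (using that distance equals spread for two-element sets), but your inductive step covers that case automatically.
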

\begin{proof}
Let $Y=\{y_1,\cdots,y_{k}\}$ be spread optimally enumerated. Then
$\{y_2,\cdots,y_k\}$ is also spread optimally enumerated. Suppose now the
expansion is valid for spread optimally enumerated sets with $k-1$ elements.  
\begin{equation}
\begin{split}
\omega_{Y\setminus y_1}(A_2\cdots A_k)&=\omega(A_2)\cdots\omega(A_k)
\\
&+\sum_{l=2}^{k-1} \omega(A_2)\cdots\omega(A_{l-1})
G_l(A_l,A_{l+1}\cdots A_k)\8e^{-\Delta(y_l,\cdots, y_k)}
\end{split}
\end{equation}
where the product $\omega(A_2)\cdots\omega(A_{l-1})=1$ is declared to be empty
for $l=2$.
Then we can use the expansion
 \begin{equation}
\begin{split}
\omega_{Y}(A_1A_2\cdots A_k)=\omega(A_1)\omega_{Y\setminus y_1}(A_2\cdots A_k)
+G_1(A_1,A_2\cdots A_k)\8e^{-\Delta(y_1,\cdots, y_k)}
\end{split}
\end{equation}
By inserting the expansion for $\omega_{Y\setminus y_1}(A_2\cdots A_k)$ gives
\begin{equation}
\begin{split}
\omega_{Y}(A_2\cdots A_k)=& \ \ \omega(A_1)\omega(A_2)\cdots\omega(A_k)
\\
&+\sum_{l=2}^{k-1} \omega(A_1)\omega(A_2)\cdots\omega(A_{l-1})
G_l(A_l,A_{l+1}\cdots A_k)\8e^{-\Delta(y_l,\cdots, y_k)}
\\
&+G_1(A_1,A_2\cdots A_k)\8e^{-\Delta(y_1,\cdots, y_k)}
\\
= \ \ &\omega(A_1)\omega(A_2)\cdots\omega(A_k)
\\
&+\sum_{l=1}^{k-1} \omega(A_1)\omega(A_2)\cdots\omega(A_{l-1})
G_l(A_l,A_{l+1}\cdots A_k)\8e^{-\Delta(y_l,\cdots, y_k)}
\end{split}
\end{equation}
Note that for a one-elementary set the statement is trivial and for a two
elementary set $\{y_1,y_2\}$ the expansion is also valid since we have
$\omega_\Lambda(A_1A_2)=\omega(A_1)\omega(A_2)+G_{(y_1,y_2)}(A_1,A_2)\8e^{-d(y_1,y_2)}$
and for any two elementary set the distance coincides with the spread. 
\end{proof}

\begin{proof}[ of Theorem~\ref{prop-expansion}]
Let $\omega_\Lambda$ be a single site homogenous state with strong exponential
clustering. Then for operators $a_1,a_2,\cdots, a_n\in \8{ker}(\omega)$ the
correlation function for the induced state $\hat\omega_X$ can be written as 
\begin{equation}
\begin{split}
\hat\omega_X(a_1\otimes\cdots\otimes a_n)=|X|^{-n/2}\sum_{x\in
  X^n}\omega_X\left(a_1^xa_2^x \cdots  a_{|\8{Ran(x)}|}^x\right) \, .
\end{split}
\end{equation}
where $a_k^x=\prod_{j\in x^{-1}(y_k)} a_j$ are the cluster operators. 
If we enumerate for each $x\in X^n$ the range $\8{Ran}(x)$ spread optimally,
we can apply Lemma~\ref{lem-expansion} to the expectation values
\begin{equation}
\begin{split}
\omega_X\left(a_1^xa_2^x \cdots  a_{|\8{Ran(x)}|}^x\right) 
&=\omega(a_1^x)\cdots \omega(a_{|\8{Ran}(x)|}^x)
\\
&+\sum_{k=1}^{|\8{Ran}(x)|-1} \omega(a_{1}^x)\cdots \omega(a_{k-1}^x)
\\
&\times
G^x_k(a^x_{k},a^x_{k+1}\cdots
a^x_{|\8{Ran}(x)|})\8e^{-\Delta(y_k,y_{k+1},\cdots,y_{|\8{Ran}(x)|})}
\end{split}
\end{equation}
where $G^x_k=G_{(\{y_k\},\{y_{k+1},\cdots,y_{|\8{Ran}(x)|}\})}$. The statement
of  Proposition~\ref{prop-expansion} follows directly by summing over the
elements in $X^n$ and normalizing by $|X|^{-n/2}$.
\end{proof}

\section{Remarks on the strong topology of the tensor algebra}
Let $\omega$ be a state of a C*-algebra $\6A$. Besides the strong topology, we
introduce here the ``$\omega$-strong topology'' on the space of continuous
linear functionals $\9T(\6A,*)^*$.
It is defined to be induced by
the family of semi-norms $\nu_n^\omega$, $n\in\7N$, where $\nu_n^\omega$
assign to each functional $F\in\9T(\6A,*)^*$
the value 
\begin{equation}
\nu_n^\omega(F):=\sup_{a_1,\cdots,a_n\in
  \8{ker}(\omega)}\|a_1\|^{-1}\cdots\|a_n\|^{-1}|F(a_1\otimes\cdots\otimes
a_n)| \; .
\end{equation}
The reason for introducing the semi-norms $\nu_n^\omega$ is that for the
functionals we are dealing with (correlation functions of fluctuation
operators) the semi-norms $\nu_n^\omega$ are easier to estimate. In view of
this, the following lemma is helpful:

\begin{lem}
\label{lem:topol}
For a given state $\omega$ on $\6A$, the $\omega$-strong topology is
equivalent to the strong topology on $\9T(\6A,*)^*$.
In particular the bounds 
\begin{equation}
\nu_n^\omega\leq\nu_n\leq \sum_{k=0}^n {n\choose k} \   2^{k} \ \nu_{k}^\omega
\end{equation}
are valid for all $n\in \7N$. 
\end{lem}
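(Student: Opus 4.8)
The plan is to establish the two displayed inequalities, from which the equivalence of the $\omega$-strong and strong topologies follows at once by the standard mutual-domination criterion for families of seminorms. The lower bound $\nu_n^\omega \leq \nu_n$ is essentially definitional. Since $\8{ker}(\omega)$ is a linear subspace of $\6A$, every nonzero $a \in \8{ker}(\omega)$ may be rescaled to $a/\|a\| \in \8{ker}(\omega)\cap V_1$; hence the supremum defining $\nu_n^\omega(F)$ is taken over a subset of the unit vectors appearing in the supremum defining $\nu_n(F)$. Thus $\nu_n^\omega(F) \leq \nu_n(F)$ for every $F \in \9T(\6A,*)^*$, so convergence in the strong topology forces convergence in the $\omega$-strong topology.

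For the upper bound I would decompose an arbitrary unit vector $a \in V_1$ as $a = a' + \omega(a)\11$ with $a' := a - \omega(a)\11 \in \8{ker}(\omega)$, and then expand the tensor product exactly as in (\ref{eq:3}):
\begin{equation}
a_1 \otimes \cdots \otimes a_n = \sum_{J \subset \{1,\dots,n\}} \Bigl(\bigotimes_{i \in J} a_i'\Bigr)\prod_{j \notin J} \omega(a_j) \; .
\end{equation}
Applying $F$ and the triangle inequality yields
\begin{equation}
|F(a_1 \otimes \cdots \otimes a_n)| \leq \sum_{J \subset \{1,\dots,n\}} \Bigl(\prod_{j \notin J} |\omega(a_j)|\Bigr)\,\bigl|F\bigl(\bigotimes_{i \in J} a_i'\bigr)\bigr| \; .
\end{equation}
The two elementary norm estimates that make the counting work are: since $\omega$ is a state it has norm one, so $|\omega(a_j)| \leq \|a_j\| = 1$ for each unit vector; and since $\|\11\| = 1$ in a C*-algebra, $\|a_i'\| \leq \|a_i\| + |\omega(a_i)| \leq 2$.

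Feeding these into the definition of $\nu_{|J|}^\omega$ gives $|F(\bigotimes_{i \in J} a_i')| \leq \nu_{|J|}^\omega(F)\prod_{i\in J}\|a_i'\| \leq 2^{|J|}\,\nu_{|J|}^\omega(F)$, while each scalar factor is bounded by $1$. Grouping the subsets $J$ by their cardinality $k = |J|$, of which there are exactly $\binom{n}{k}$, and taking the supremum over all unit vectors then yields the claimed bound $\nu_n \leq \sum_{k=0}^n \binom{n}{k}\,2^k\,\nu_k^\omega$. There is no serious obstacle here: the argument is a direct bookkeeping exercise once the decomposition (\ref{eq:3}) is in hand, and the only point requiring a little care is keeping track of which factors are scalars (the $\omega(a_j)$, bounded by $1$) and which are genuine kernel operators (the $a_i'$, bounded in norm by $2$), so that the combinatorial factor $\binom{n}{k}2^k$ emerges correctly. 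The equivalence of the two topologies is then immediate: the first inequality shows each $\nu_n^\omega$ is dominated by $\nu_n$, while the second shows each $\nu_n$ is dominated by a finite nonnegative combination of the $\nu_k^\omega$, which is precisely the criterion guaranteeing that the two seminorm families induce the same locally convex topology.
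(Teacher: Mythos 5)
Your proposal is correct and follows essentially the same route as the paper's own proof: the identical decomposition $a=a'+\omega(a)\11$ with $a'\in\8{ker}(\omega)$, the same expansion of $a_1\otimes\cdots\otimes a_n$ over subsets, and the same two estimates $|\omega(a_j)|\leq\|a_j\|$ and $\|a_i'\|\leq 2\|a_i\|$, which after grouping subsets by cardinality produce exactly the factor ${n\choose k}2^k$; the lower bound $\nu_n^\omega\leq\nu_n$ is, as you say, just restriction of the supremum to the kernel. There is nothing to correct.
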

\begin{proof}
Each operator $a\in\6A$ can be written as $a'+c\11$ with
$a'\in\8{ker}(\omega)$ and $c=\omega(a)$. This yields 
\begin{equation}
\begin{split}
a_1\otimes \cdots \otimes a_n&= (a_1'+c_1\11)\otimes \cdots\otimes
(a_n'+c_n\11)
=\sum_{I\subset\{1,2,\cdots n\}} \prod_{j\in I^c}c_j  \
  \bigotimes_{i\in I} a'_i  \; .
\end{split}
\end{equation}
Applying the linear functional on both sides and taking the modulus, we obtain
\begin{equation}
\label{eq:7}
\begin{split}
|F(a_1\otimes \cdots \otimes a_n)|
&\leq \sum_{I\subset\{1,2,\cdots n\}} \prod_{j\in I^c}\|a_j\| \prod_{i\in I}\|a_i'\| \ \nu_{|I|}^\omega(F)
\\
&\leq \|a_1\|\cdots\|a_n\| 
\sum_{I\subset\{1,2,\cdots n\}} 2^{|I|} \ \nu_{|I|}^\omega(F) \, .
\end{split}
\end{equation}
Here we have used that $\|a_i'\|\leq 2\|a_i\|$. This implies the desired bound 
\begin{equation}
\nu_n^\omega\leq \nu_n\leq \sum_{I\subset\{1,2,\cdots n\}} 2^{|I|} \
\nu_{|I|}^\omega \; .
\end{equation}
Note that the semi-norm $\nu_n^\omega$ is optimizing the modulus of the value
of a functional over tuples $(a_1,\cdots,a_n)$ in the kernel $\8{ker}(\omega)$. This immediately
implies that $\nu_n\geq\nu_n^\omega$. 
\end{proof}

\section{States with exponential
  clustering have weakly $\sqrt{\8n}$-fluctuations (proof of Theorem~\ref{thm-main})}
\label{sec:stat-with-expon}

\subsection{Homogenous product states (proof of Proposition~\ref{prop-prod})}
\begin{proof}[ of Proposition~\ref{prop-prod}]
Let $\omega_\Lambda=\omega^{\otimes\Lambda}$ be a homogenous product
state. For operators $a_1,\cdots,a_n\in\8{ker}(\omega)$ the correlation
function of the induced state $\hat\omega_X$ is given by
\begin{equation}
\label{eq:5}
\begin{split}
\hat\omega^{\otimes X}(a_1\otimes\cdots\otimes a_n)
&=|X|^{-\frac{n}{2}}\sum_{x\in X^n}\omega(a_{1}^x)\cdots\omega(a_{|\8{Ran}(x)|}^x)
 \; ,
\end{split}
\end{equation}
where $a_k^x$ is the cluster operator $a_k^x=\prod_{j\in x^{-1}(y_k)}
a_j$. Let $E_k(X)$ the set of all enumerated $k$-elementary subsets in $X$ and
let $\Pi(k,n)$ the set of all ordered partitions of $\{1,\cdots,n\}$ into $k$
non-empty subsets. Then (\ref{eq:5}) can be written as 
 \begin{equation}
\label{eq:5}
\begin{split}
\hat\omega^{\otimes X}(a_1\otimes\cdots\otimes a_n)
&=|X|^{-\frac{n}{2}}\sum_{k=1}^n\sum_{\{y_1,\cdots,y_k\}\in
  E_k(X)}\sum_{(I_1,\cdots,I_k)\in \Pi(k,n)}\omega(a_{I_1})\cdots\omega(a_{I_k})
\\
&=|X|^{-\frac{n}{2}}\sum_{k=1}^n|E_k(X)|\sum_{(I_1,\cdots,I_k)\in \Pi(k,n)}\omega(a_{I_1})\cdots\omega(a_{I_k})
 \; ,
\end{split}
\end{equation}
with cluster operators $a_{I_j}:=\prod_{i\in I_j} a_i$. Here we have used the
fact that for each tuple $x\in X^n$ there is a unique enumerated
$k$-elementary subset $\{y_1,\cdots, y_k\}$ ($k\leq n$) and a partition $(I_1,\cdots,I_k)\in\Pi(k,n)$
such that $x_i=y_j$ for $i\in I_j$.
Suppose that for a partition $(I_1,\cdots,I_k)$ one of the sets $I_j$ contains
only one element $I_j=\{l\}$, then this partition is not contributing to the
sum since $\omega(a_{I_j})=\omega(a_l)=0$. Hence, only those partitions with
$|I_j|\geq 2$ are contributing. For the $n$-point correlation function
(restricted to $\8{ker}(\omega)^{\otimes n}$) we
can write 
\begin{equation}
\label{eq:6}
\begin{split}
\hat\omega^{\otimes X}(a_1\otimes\cdots\otimes a_{n})
&=\prod_{l=0}^{n/2-1}\left(1-\frac{l}{|X|}\right)\sum_{(I_1,\cdots,I_{n/2})\in \Pi_2(n)}\omega(a_{I_1})\cdots\omega(a_{I_{n/2}})
\\
&+\sum_{k=1,k\not=n/2}^{n}|X|^{-n/2+k}\prod_{l=0}^{k-1}\left(1-\frac{l}{|X|}\right)
\sum_{(I_1,\cdots,I_k)\in\Pi_{>2}(k,n)}\omega(a_{I_1})\cdots\omega(a_{I_k}) 
\; ,
\end{split}
\end{equation}
where $\Pi_2(n)$ is the set of all ordered partitions into two elementary
subsets and $\Pi_{>2}(k,n)$ is the set of all ordered partitions into $k$
subsets containing more than one element $(I_1,\cdots, I_k)$, where at least one subset $I_j$
contains more than two elements. Note, that the first term in (\ref{eq:6}) is
vanishing if $n$ is odd. $E_k(X)$ is the set of enumerated $k$-elementary
subsets whose cardinality is 
$|E_{k}(X)|=\prod_{l=0}^{k-1}(|X|-l)$. Restricted to the sub-algebra
$\9T(\8{ker}(\omega),*)$, the functional $\hat\omega^{\otimes X}$ can be
written as  
\begin{equation}
\label{eq:8}
\hat\omega^{\otimes
  X}=\prod_{l=0}^{n/2-1}\left(1-\frac{l}{|X|}\right)\hat\omega_{\8{qf}} +D_X
\end{equation}
with a continuous functional $D_X\in\9T(\6A,*)^*$ that is given by 
\begin{equation}
\begin{split}
D_X(a_1\otimes \cdots\otimes a_n)
&:=\sum_{k=1}^{n}|X|^{-n/2+k}\prod_{l=0}^{k-1}\left(1-\frac{l}{|X|}\right)
\sum_{(I_1,\cdots,I_k)\in\Pi_{>2}(k,n)}\omega(a_{I_1})\cdots\omega(a_{I_k}) 
\end{split}
\end{equation}
for $a_1,\cdots, a_n\in \8{ker}(\omega)$. To bound the semi-norm
$\nu_n^\omega(D_X)$, we observe the bound  
\begin{equation}
\begin{split}
|D_X(a_1\otimes \cdots\otimes a_n)|
&\leq \|a_1\|\cdots \|a_n\| \ \sum_{k=1}^{n}|X|^{-n/2+k}\prod_{l=0}^{k-1}\left(1-\frac{l}{|X|}\right) |\Pi_{>2}(k,n)| 
\; .
\end{split}
\end{equation}
If we consider a partition $(I_1,\cdots,I_k)$ in $\Pi_{>2}(k,n)$ then the
constraint $2k+1\leq n$ has to be fulfilled. This implies $|X|^{-n/2+k}\leq
|X|^{-1/2}$ and we obtain:
\begin{equation}
\label{eq:4}
\begin{split}
|D_X(a_1\otimes \cdots\otimes a_n)|
&\leq \|a_1\|\cdots \|a_n\| \
|X|^{-1/2}\sum_{k=1}^{n}\prod_{l=0}^{k-1}\left(1-\frac{l}{|X|}\right) S(k,n)
\delta(2k+1\leq n)
\; ,
\end{split}
\end{equation}
where $S(k,n)=|\Pi(k,n)|$ is the number of all partitions of the set $\{1,\cdots,n\}$
into $k$ non-empty subsets (Stirling number of second kind). For a logical statement
$S$, the $\delta$-function is defined by $\delta(S)=1$, if $S$ is true, and
$\delta(S)=0$ if $S$ is false. By
Lemma~\ref{lem:topol} it is sufficient to show convergence for the semi-norms
$\nu_n^\omega$. From the inequality (\ref{eq:4}) we obtain for $|X|>n$ the bound
\begin{equation}
\begin{split}
\nu_n^\omega(D_X)
&\leq \
|X|^{-1/2}\sum_{k=1}^{n}S(k,n)
\delta(2k+1\leq n)
\end{split}
\end{equation} 
which implies that $s-\lim_{X\subset \Lambda} D_X=0$. Since
$\lim_{X\subset\Lambda}\prod_{l=0}^{n/2-1}\left(1-\frac{l}{|X|}\right)=1$, the
proposition follows finally from (\ref{eq:8}): 
\begin{equation}
s-\lim_{X\subset\Lambda}\hat\omega^{\otimes
  X}=s-\lim_{X\subset\Lambda}\prod_{l=0}^{n/2-1}\left(1-\frac{l}{|X|}\right)\hat\omega_{\8{qf}}
=\left(\lim_{X\subset\Lambda}\prod_{l=0}^{n/2-1}\left(1-\frac{l}{|X|}\right)\right)
\hat\omega_{\8{qf}}=\hat\omega_{\8{qf}} \; .
\end{equation}

\end{proof}

\subsection{Estimating the number of  subsets for a given spread}
To formulate our next lemma, we introduce the function $\Delta$ that assigns
to each finite subset $Y\subset\Lambda$ the maximal distance of a point in $Y$
to its complement: $\Delta(Y):=\max_{y\in Y} d(y,Y\setminus y)$. To prove
bound on correlation functions, one partial task is to count for a finite
subset $X\subset\Lambda$ the number 
$N(X,k,r)$ of all $k$-elementary ($k\geq 2$) subsets $Y$ in $X$ such
that $\Delta(Y)\leq r$. Whereas the spread $\Delta$ measures the spreading of
one point sets, we can also look at the spreading of sets that contain more
elements. We introduce the $k$-spread of a set $Y$ as
$\Delta_k(Y)=\max_{J\in P_k(Y)}d(J,Y\setminus J)$ which measures the
largest distance of a $k$-elementary subset $J\in P_k(Y)$ to its relative
complement in $Y$ (here $P_k(Y)$
denotes the set of all $k$-elementary subsets in $Y$). According to this
definition, the one-spread $\Delta_1=\Delta$ is just the spread.   

\begin{lem}
\label{lemma-tech-0}
Let $Y$ be a subset of $X$ with $\Delta(Y)\leq r$ and $\Delta_2(Y)>r$. Then
$Y=Z\cup \{x,y\}$ is the disjoint union of a set $Z$ and a two elementary
subset $\{x,y\}$ such that $\Delta(Z)\leq r$ and $d(x,y)\leq r$.
\end{lem}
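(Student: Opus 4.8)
The plan is to extract the isolated pair directly from the hypothesis $\Delta_2(Y)>r$ and then to use the single-point bound $\Delta(Y)\le r$ to force that pair together while leaving the remainder with small spread. First I would choose a two-elementary subset $\{x,y\}\in P_2(Y)$ realizing the $2$-spread, so that $d(\{x,y\},Y\setminus\{x,y\})>r$; such a pair exists precisely because $\Delta_2(Y)=\max_{J\in P_2(Y)}d(J,Y\setminus J)>r$ and the maximum over the finite set $P_2(Y)$ is attained. Setting $Z:=Y\setminus\{x,y\}$, the meaning of $d(\{x,y\},Z)=\min_{w\in Z}\min(d(x,w),d(y,w))>r$ is simply that both $x$ and $y$ lie at distance strictly greater than $r$ from every point of $Z$. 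This is the only structural input I take from the $2$-spread hypothesis; the rest follows from $\Delta(Y)\le r$.

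Next I would establish $d(x,y)\le r$. Since $\Delta(Y)\le r$, in particular $d(x,Y\setminus x)\le r$. But $Y\setminus x=Z\cup\{y\}$, and I have just observed $d(x,Z)>r$, so the minimum defining $d(x,Y\setminus x)=\min(d(x,Z),d(x,y))$ cannot be attained on $Z$; it must be attained at $y$, whence $d(x,y)=d(x,Y\setminus x)\le r$. I expect this to be the real content of the lemma: the interplay between ``no single point is isolated'' ($\Delta(Y)\le r$) and ``the pair is isolated from the rest'' ($\Delta_2(Y)>r$) is exactly what pins $x$ and $y$ together, and this is the step where both hypotheses are used simultaneously.

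Finally I would verify $\Delta(Z)\le r$ by the same mechanism applied to the remaining sites. For each $z\in Z$ the hypothesis gives $d(z,Y\setminus z)\le r$, and since $z$ lies at distance $>r$ from both $x$ and $y$, the term $d(z,\{x,y\})>r$ cannot realize the minimum $d(z,Y\setminus z)=\min(d(z,Z\setminus z),d(z,\{x,y\}))$; hence $d(z,Z\setminus z)=d(z,Y\setminus z)\le r$. Taking the maximum over $z\in Z$ yields $\Delta(Z)\le r$. The remaining task is only to dispose of the degenerate cases where $Z$ is empty or a singleton, in which $\Delta(Z)$ is vacuous; I would handle these by the convention that the maximum over an empty index set is $0$. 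These correspond to the smallest sets $Y$ and present no real difficulty, so the argument as a whole is short once the pair has been correctly identified.
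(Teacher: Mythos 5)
Your proof is correct and follows essentially the same route as the paper's: take the pair $\{x,y\}$ isolated from the rest by $\Delta_2(Y)>r$, then use the pointwise spread condition $d(w,Y\setminus w)\le r$ once at $w=x$ (forcing $d(x,y)\le r$ since $d(x,Z)>r$) and once at each $w=z\in Z$ (forcing $d(z,Z\setminus z)\le r$ since $d(z,\{x,y\})>r$). The paper merely writes the same argument via an explicit max-formula for $\Delta(Y)$, so there is no substantive difference.
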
 
\begin{proof}
Since the 2-spread of $Y$ is larger than $r$, there exists a pair of points $\{x,y\}$ such that
$d(\{x,y\},Y\setminus\{x,y\})>r$. Let $Z=Y\setminus\{x,y\}$ be the relative
complement of $\{x,y\}$. Then the spread of $Y$ can be expressed as 
\begin{equation}
\Delta(Y)=\max_{z\in Z}\left\{\min\{d(z,Z\setminus
  z),d(z,x),d(z,y)\},\min\{d(x,Z),d(x,y)\},\min\{d(y,Z),d(x,y)\}\right\} \; .
\end{equation}
Since $\Delta(Y)\leq r$ the bound $\min\{d(y,Z),d(x,y)\}\leq r$ has to be
fulfilled. This implies that $d(x,y)\leq r$ because  $d(y,Z)>r$ holds by our
assumption. Moreover, the inequality 
\begin{equation}
\Delta(Y)\geq \max_{z\in Z}\left\{\min\{d(z,Z\setminus
  z),d(z,x),d(z,y)\}\right\}
\end{equation}
is obviously fulfilled. For each $z\in Z$ we have $r\geq \min\{d(z,Z\setminus
z),d(z,x),d(z,y)\}=d(z,Z\setminus z)$ 
since $d(z,x)>r$ and $d(z,y)>r$ is valid according to our assumption that each point in $Z$ has a distance $>r$
  to $x$ and $y$. This implies the inequality
\begin{equation}
r\geq \Delta(Y)\geq \max_{z\in Z}d(z,Z\setminus
  z) =\Delta(Z) \; .
\end{equation}  
Hence, $Y$ can be decomposed into a disjoint union of a set $Z$ and two points
$\{x,y\}$ such that $\Delta(Z)\leq r$ and $d(x,y)\leq r$.
\end{proof}

\begin{lem}
\label{lemma-tech-1}
Let $Y$ be a subset such that $\Delta(Y)\leq r$ and $\Delta_2(Y)\leq r$. Then
$Y=Z\cup \{x\}$ is the disjoint union of a set $Z$ and a single point $x$ such
that $\Delta(Z)\leq r$. 
\end{lem}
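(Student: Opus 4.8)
The plan is to mirror the pair-peeling argument of Lemma~\ref{lemma-tech-0}, but to peel off a single point. The first step is to translate both hypotheses into the language of ``within-$r$ neighbourhoods.'' For $y\in Y$ put $N(y):=\{w\in Y\setminus\{y\}\mid d(y,w)\le r\}$. Then $\Delta(Y)\le r$ is equivalent to $N(y)\neq\emptyset$ for every $y$, and $\Delta_2(Y)\le r$ is equivalent to the statement that there is no pair $\{x,y\}$ with $N(x)\subseteq\{y\}$ and $N(y)\subseteq\{x\}$ (an ``isolated pair''): indeed $d(\{x,y\},Y\setminus\{x,y\})\le r$ holds iff some $c\notin\{x,y\}$ satisfies $c\in N(x)$ or $c\in N(y)$. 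The goal $\Delta(Z)\le r$ for $Z=Y\setminus\{x\}$ then reads: after deleting $x$, every surviving point still possesses a neighbour within $r$.

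Next I would choose the point $x$ to delete by a case distinction. If $|N(y)|\ge 2$ for all $y\in Y$, I delete an arbitrary $x$; each $z\in Z$ then loses at most the single neighbour $x$, so $N(z)\setminus\{x\}$ is still nonempty and $z$ keeps a within-$r$ partner in $Z$. If instead some point $y_0$ is a ``leaf,'' i.e. $N(y_0)=\{w\}$, I delete $x:=y_0$. For every $z\neq w$ one has $y_0\notin N(z)$ (otherwise $z\in N(y_0)=\{w\}$), so $N(z)$ is unaffected and $z$ keeps its partner; the only point that the deletion could endanger is $w$.

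The decisive step --- and the only use of the hypothesis $\Delta_2(Y)\le r$ --- is to show that $w$ survives, that is, that $w$ has a neighbour other than $y_0$. If it did not, then $N(w)=\{y_0\}$; combined with $N(y_0)=\{w\}$ this exhibits $\{y_0,w\}$ as an isolated pair and (for $|Y|\ge 3$) forces $d(\{y_0,w\},Y\setminus\{y_0,w\})>r$, contradicting $\Delta_2(Y)\le r$. Hence $N(w)$ contains a point $w'\neq y_0$; this $w'$ lies in $Z\setminus\{w\}$ and satisfies $d(w,w')\le r$, so $w$ too retains a within-$r$ partner. In both cases every point of $Z$ has a neighbour within $r$, i.e. $\Delta(Z)\le r$, and $Y=Z\cup\{x\}$ is the required disjoint decomposition; the degenerate cases $|Y|\le 2$ are trivial because the spread of a one-point set is $\le r$.

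I expect the main obstacle to be the bookkeeping of the leaf case: one has to check that deleting $y_0$ can isolate no point except possibly its unique neighbour $w$, and then to rule out the configuration $N(y_0)=\{w\}$, $N(w)=\{y_0\}$. This last exclusion is exactly what $\Delta_2(Y)\le r$ provides, and it plays here the role that the separating pair $\{x,y\}$ played in Lemma~\ref{lemma-tech-0}.
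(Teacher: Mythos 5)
Your proof is correct and takes essentially the same route as the paper's: the identical case distinction between ``every point has at least two within-$r$ neighbours'' (delete any point) and ``there is a leaf $y_0$ with a unique neighbour $w$'' (delete $y_0$, and use $\Delta_2(Y)\le r$ on the pair $\{y_0,w\}$ to guarantee that $w$ keeps a partner). Your neighbourhood sets $N(y)$ are just a cleaner packaging of the paper's counts $I(y,Y,r)$ and its explicit spread formulas, so the two arguments coincide step for step.
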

\begin{proof}
Let $\Delta_2(Y)\leq r$, i.e. for each pair $\{x,y\}$, the distance to the relative complement
$d(\{x,y\},Y\setminus\{x,y\})\leq r$. In order to discuss this case, we
introduce for each point $y\in Y$ the number $I(y,Y,r)=|\{z\in Y\setminus y|d(y,z)\leq
r\}|$ of points in the relative complement of $y$ whose distance to $y$ is
smaller than $r$. Now we have to perform a further case distinction:
\begin{itemize}
\item
There exists a point $x\in Y$ such that $I(x,Y,r)=1$. In this case the spread
of $Y$ can be expressed in terms of the set $Z=Y\setminus x$ and the single
point $x$ as
\begin{equation}
\begin{split}
\Delta(Y)=&\max_{z\in Z}\left\{\min\{d(z,Z\setminus
  z),d(z,x)\},d(x,Z) \right\} \; .
\end{split}
\end{equation}
Suppose now that $d(x,z)\leq r$ holds. Then this is true for a unique point $z=z^*\in
Z$. Since the 2-spread fulfills the bound $\Delta_2(Y)\leq r$, for all 
$y\in Z\setminus z^*$ the bound $d(\{x,z^*\},y)\leq r$ follows. This implies
that $d(z^*,y)\leq r$ since  $d(x,y)>r$ holds for all points $y\in Z\setminus z^*$. 
Hence $d(z^*,Z\setminus z^*)\leq r$. In all other cases, we have $d(x,z)>
r$ which implies $r\geq \min\{d(z,Z\setminus
  z),d(z,x)\}=d(z,Z\setminus z)$. 
Putting these things together implies finally 
\begin{equation}
\begin{split}
\Delta(Z)=\max_{z\in Z}d(z,Z\setminus z)\leq r \; .
\end{split}
\end{equation}

\item
For all points $y\in Y$ we have  $I(y,Y,r)\geq 2$. In this case we can take
any point $x$ and express the spread of $Y$ as
\begin{equation}
\begin{split}
\Delta(Y)=&\max_{z\in Z}\left\{\min\{d(z,Z\setminus
z),d(z,x)\},d(x,Z)\right\} \; .
\end{split}
\end{equation}
with $Z=Y\setminus x$. Suppose now that $d(z,x)\leq r$ for some $z\in
Z$. Since  $I(x,Y,r)\geq 2$ there exists at least one $y\in Z$ such that
$d(z,y)\leq r$ which implies $d(z,Z\setminus
z)\leq r$. On the other hand, if $d(z,x)>r$, then $d(z,Z\setminus
z)\leq r$ since $\min\{d(z,Z\setminus
z),d(z,x)\}\leq r$. This implies again
\begin{equation}
\begin{split}
\Delta(Z)=\max_{z\in Z}d(z,Z\setminus z)\leq r \; .
\end{split}
\end{equation}
\end{itemize}
\end{proof}

\begin{lem}
\label{lem-01}
For each finite subset $X$, for each $r>0$, and  for each $n\geq 2$ the
bounds 
\begin{equation}
\begin{split}
N(X,n,r)\leq q_{[n]} |X|^{n/2} N(r)^{n/2}
\end{split}
\end{equation} 
are valid, where the numbers $q_{[n]}$ are recursively determined by 
$q_{[n+2]}=(n+1)q_{[n+1]}+q_{[n]}$ with initial conditions $q_{[2]}=1$ and $q_{[3]}=2$.
\end{lem}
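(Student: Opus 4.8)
The plan is to prove the bound by induction on $n$, with $n=2$ and $n=3$ as base cases and, for $n\ge 4$, a recursive estimate that mirrors the defining recursion $q_{[n]}=(n-1)q_{[n-1]}+q_{[n-2]}$ for the constants. Concretely, I would first establish the structural recursion
\begin{equation}
N(X,n,r)\le (n-1)\,N(r)\,N(X,n-1,r)+\tfrac{1}{2}\,|X|\,N(r)\,N(X,n-2,r)
\end{equation}
and then convert it into the closed-form bound. The recursion itself is obtained from the two decomposition lemmas, Lemma~\ref{lemma-tech-0} and Lemma~\ref{lemma-tech-1}, by partitioning the admissible $n$-sets according to whether their $2$-spread $\Delta_2$ exceeds $r$ or not.

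To derive the recursion I would fix an $n$-element $Y\subset X$ with $\Delta(Y)\le r$ and split into two cases. If $\Delta_2(Y)\le r$, Lemma~\ref{lemma-tech-1} writes $Y=Z\sqcup\{x\}$ with $|Z|=n-1$ and $\Delta(Z)\le r$; since $\Delta(Y)\le r$ forces $d(x,Z)\le r$, the point $x$ lies in the union of the $r$-balls around the $n-1$ points of $Z$, so by regularity of $d$ there are at most $(n-1)N(r)$ choices for $x$ once $Z$ is given. If $\Delta_2(Y)>r$, Lemma~\ref{lemma-tech-0} writes $Y=Z\sqcup\{x,y\}$ with $|Z|=n-2$, $\Delta(Z)\le r$ and $d(x,y)\le r$; the number of pairs $\{x,y\}\subset X$ with $d(x,y)\le r$ is at most $\tfrac12|X|N(r)$ (choose $x\in X$, then $y$ within distance $r$). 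In each case the assignment $Y\mapsto(Z,x)$, respectively $Y\mapsto(Z,\{x,y\})$, is injective because $Z$ together with the removed point(s) recovers $Y$, and the relevant $Z$ are counted by $N(X,n-1,r)$, respectively $N(X,n-2,r)$. Since the two cases partition the admissible sets, summing the two injective counts gives the displayed recursion; any overcounting only inflates the right-hand side, so it remains a valid upper bound.

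For the inductive step I would feed the hypotheses $N(X,n-1,r)\le q_{[n-1]}|X|^{(n-1)/2}N(r)^{(n-1)/2}$ and $N(X,n-2,r)\le q_{[n-2]}|X|^{(n-2)/2}N(r)^{(n-2)/2}$ into the recursion. The pair term yields $\tfrac12 q_{[n-2]}|X|^{n/2}N(r)^{n/2}\le q_{[n-2]}|X|^{n/2}N(r)^{n/2}$, already in the target shape. The single-point term yields $(n-1)q_{[n-1]}|X|^{(n-1)/2}N(r)^{(n+1)/2}$, which carries one extra power of $N(r)^{1/2}$ and one missing power of $|X|^{1/2}$; it reaches the desired $(n-1)q_{[n-1]}|X|^{n/2}N(r)^{n/2}$ precisely when $N(r)\le|X|$. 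Adding the two contributions reproduces $(n-1)q_{[n-1]}+q_{[n-2]}=q_{[n]}$. The base cases are checked directly: for $n=2$, admissible sets are pairs with $d(x,y)\le r$, counted by $\tfrac12|X|N(r)\le q_{[2]}|X|N(r)$ (unconditionally), and for $n=3$ a connected triple has a centre within $r$ of the other two, giving at most $\tfrac12|X|N(r)^2$, which is $\le q_{[3]}|X|^{3/2}N(r)^{3/2}$ when $N(r)\le|X|$.

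The one point needing care, and the main obstacle, is exactly the regime $N(r)>|X|$, where the single-point step loses a factor of $|X|^{1/2}$ against $N(r)^{1/2}$ and the induction as stated fails. I would resolve this by noting that the claimed inequality is trivial there: one always has $N(X,n,r)\le\binom{|X|}{n}\le|X|^{n}$, and if $N(r)>|X|$ then $|X|^{n}\le|X|^{n/2}N(r)^{n/2}\le q_{[n]}|X|^{n/2}N(r)^{n/2}$ since $q_{[n]}\ge 1$. Hence the argument is split at the threshold $N(r)=|X|$: above it the trivial binomial estimate suffices, below it the induction applies verbatim, and the claimed bound holds in both regimes.
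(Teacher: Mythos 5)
Your proof is correct and follows essentially the same route as the paper: the same case split via Lemma~\ref{lemma-tech-0} and Lemma~\ref{lemma-tech-1}, the same structural recursion $N(X,n,r)\le (n-1)N(r)\,N(X,n-1,r)+|X|N(r)\,N(X,n-2,r)$ (you even gain a harmless factor $\tfrac12$ on the pair term), and the same inductive ansatz matched to the recursion $q_{[n]}=(n-1)q_{[n-1]}+q_{[n-2]}$. The one substantive difference is your explicit treatment of the regime $N(r)>|X|$, and this is a genuine improvement: the paper converts its recursion using the inequality $N(r)\le N(r)^{1/2}|X|^{1/2}$, which is equivalent to $N(r)\le|X|$ and can fail, since $N(r)$ counts sites in the infinite lattice $\Lambda$ while $X$ may be small compared to the ball of radius $r$; the paper does not comment on this (its base case $N(X,3,r)\le 2|X|N(r)^2\le 2|X|^{3/2}N(r)^{3/2}$ has the same hidden assumption). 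Your threshold split --- the trivial bound $N(X,n,r)\le\binom{|X|}{n}\le|X|^n\le|X|^{n/2}N(r)^{n/2}$ when $N(r)>|X|$, with $q_{[n]}\ge1$, and the induction verbatim when $N(r)\le|X|$ --- closes this gap cleanly, and the split is coherent because the threshold condition involves only $X$ and $r$, not the induction variable $n$. An alternative patch closer to the paper's formulation: the number of points of $X$ within distance $r$ of a fixed point is at most $\min(N(r),|X|)\le\sqrt{N(r)|X|}$, which restores the paper's key inequality without any case distinction. Either way, your write-up makes explicit an assumption the paper leaves silent.
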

\begin{proof}
According to Lemma~\ref{lemma-tech-0} and Lemma~\ref{lemma-tech-1}, we have the
following two cases:
\begin{enumerate}
\item 
Each subset $Y\subset X$ with $\Delta(Y)\leq r$ is a disjoint union $Y=Z\cup
\{x,y\}$ with $\Delta(Z)\leq r$ and $d(x,y)\leq r$.
\item 
$Y$ is a disjoint union $Y=Z\cup \{x\}$ with $\Delta(Z)\leq r$. 
\end{enumerate}
Putting both cases together, we obtain a recursion formula for bounding
$N(X,n+2,r)$. For the first case, we can choose for each $n$-elementary subset
$Z$ with $\Delta(Z)\leq r$ a  pair $\{x,y\}$ with $d(x,y)\leq r$. This gives
less than $N(X,n,r)|X|N(r)$ possibilities. We have to add the second case,
where for each $n+1$-elementary subset $Z$ with $\Delta(Z)\leq r$ we can just
add a point $x$ such that $\Delta(Z\cup\{x\})\leq r$. Since there must be a
point $z\in Z$ with $d(x,z)\leq r$, we have $(n+1)N(r)$ possibilities to
choose $x$ which gives less than $(n+1)N(r)N(X,n+1,r)$ possibilities. In total,  
we obtain the recursive bound 
\begin{equation}
N(X,n+2,r)\leq (n+1)N(r)N(X,n+1,r)+|X|N(r)N(X,n,r)  \; .
\end{equation}
To get a convenient explicit solution of this recursion, we allow to
over-count a bit by using the inequality  $N(r)\leq N(r)^{1/2}|X|^{1/2}$:  
\begin{equation}
N(X,n+2,r)\leq (n+1)N(r)^{1/2}|X|^{1/2} N(X,n+1,r)+|X|N(r)N(X,n,r)  \; .
\end{equation}
Now, we insert the ansatz $N(X,n,r)\leq q_{[n]} |X|^{n/2}N(r)^{n/2}$ into the
recursion bound which gives the following consistency relation:
\begin{equation}
\begin{split}
q_{[n+2]} |X|^{(n+2)/2}N(r)^{(n+2)/2} 
&\leq 
(n+1)N(r)^{1/2}|X|^{1/2}q_{[n+1]}
|X|^{(n+1)/2}N(r)^{(n+1)/2} 
\\
&+|X|N(r)q_{[n]} |X|^{n/2}N(r)^{n/2}
\\
&\leq 
\left[(n+1)q_{[n+1]}+q_{[n]}\right] |X|^{(n+2)/2}N(r)^{(n+2)/2}
\end{split}
\end{equation}
Thus, if the sequence $q_{[n]}$ fulfills the recursion relation  
$q_{[n+2]}=(n+1)q_{[n+1]}+q_{[n]}$, then we obtain a consistent upper bound.
For getting the correct 
initial conditions with respect to $n$, we first look
at the case $k=2$. The first point of the requested set can be chosen 
freely in $X$, which gives $|X|$ possibilities. The second point must have
distance $r$ to the one firstly chosen. Hence we obtain the bound $N(X,2,r)\leq |X|N(r)$. 
For $k=3$ two points within the requested set must have distance $r$. This
gives at most $|X|N(r)$ possibilities. The third point must have a distance
$\leq r$ to one of the chosen two points. This yields  $N(X,3,r)\leq
2|X|N(r)^2\leq 2|X|^{3/2}N(r)^{3/2}$.
\end{proof}

\subsection{Towards bounding  the semi-norms $\nu_n$}
Recall that $\Lambda$ is a countably infinite lattice with a {\em regular}
distance $d$. Here regular means that the maximal number of lattice sites
within a ball of radius $r$ is bounded by a polynomial $P(r)$, i.e. 
$N(r)=\sup_{x\in\Lambda}|\{y\in\Lambda|d(x,y)\leq r\}|\leq P(r)$ for all $r\in\7R_+$. 

If we require that the distance $d$ is regular, than $N(r)$ is bounded by a
polynomial which implies that $\hat N_k=\sum_{r=1}^\infty N(r)^k e^{-r}$ is
finite and monotonous increasing in $k$. For each finite subset $X$, the
semi-norms $\nu_n$ of the induced state
$\hat\omega_X=\hat\omega^{\otimes X}+F_X$ can be bounded by bounding the
corresponding semi-norms of $\hat\omega^{\otimes X}$ and $F_X$ separately.   
We know already from Proposition~\ref{prop-prod} that the semi-norms of 
$\hat\omega^{\otimes X}$ are uniformly bounded in $X$. Thus it is sufficient
to bound the semi-norms of $F_X$. For this purpose, we introduce the
quantities    
\begin{equation}
\begin{split}
B_n(X):=\sum_{x\in X^n}
\sum_{k=1}^{|\8{Ran}(x)|-1}
\prod_{l=1}^{k-1}\delta(|x^{-1}(y_l)|>1)\8e^{-\Delta(y_k,y_{k+1},\cdots,y_{|\8{Ran}(x)|})}
\; .
\end{split}
\end{equation}

\begin{lem}
\label{lem-02}
For each finite subset $X\subset\Lambda$, $B_n(X)$ fulfill the bound
\begin{equation}
\begin{split}
B_n(X)\leq  \hat B_n   \ |X|^{n/2}
\end{split}
\end{equation}
for each finite subset $X\subset\Lambda$,
where $\hat B_n$ are finite positive numbers that are given by 
\begin{equation}
\begin{split}
\hat B_n=  n!\sum_{k=1}^{n} S(k,n)\sum_{l=1}^{k-1}  q_{[k-l+1]} \hat N_{k-l+1}  \; .
\end{split}
\end{equation}
Here $S(k,n)$ are the Stirling
numbers of the second kind and the numbers $q_{[n]}$ are recursively determined by 
$q_{[n+2]}=(n+1)q_{[n+1]}+q_{[n]}$ with initial conditions $q_{[2]}=1$ and $q_{[3]}=2$. 
\end{lem}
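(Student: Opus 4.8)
The plan is to estimate $B_n(X)$ by organizing the sum over tuples $x\in X^n$ according to the size $k=|\8{Ran}(x)|$ of the range and the split index $l$, and then to separate the purely combinatorial bookkeeping (how the $n$ indices are distributed among the $k$ distinct sites) from the geometric sum (the placement of the sites together with the exponential damping). First I would write each tuple $x$ with $|\8{Ran}(x)|=k$ as the datum of an unordered range $Y\subset X$ of cardinality $k$ together with a surjection $\{1,\dots,n\}\to Y$; since the spread-optimal enumeration $y_1,\dots,y_k$ of $Y$ is fixed by the geometry, both the factor $\8e^{-\Delta(y_l,\dots,y_k)}$ and the constraints $|x^{-1}(y_j)|>1$ (for $j<l$) are determined by $Y$ and $l$ alone. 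This turns $B_n(X)$ into a double sum over $k$ and $l$ of a combinatorial count times a geometric sum over the admissible $Y$.

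For the combinatorial factor, the number of surjections from $\{1,\dots,n\}$ onto a fixed $k$-element range equals $k!\,S(k,n)$, and I would bound this crudely by $n!\,S(k,n)$ using $k\le n$. The essential observation — and the reason the naive bound $\8O(|X|^{-n/2})$ improves to a uniform one — is that the indicator $\prod_{j=1}^{l-1}\delta(|x^{-1}(y_j)|>1)$ forces the first $l-1$ sites to carry at least two indices each, while the remaining $k-l+1$ sites carry at least one. Hence a nonzero summand requires $n\ge 2(l-1)+(k-l+1)=k+l-1$, so every term violating this simply vanishes.

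Next I would split the range as a disjoint union $Y=H\sqcup T$ into a head $H=\{y_1,\dots,y_{l-1}\}$ and a tail $T=\{y_l,\dots,y_k\}$ of size $k-l+1$, on which the exponential alone depends. Letting the head sites run freely gives at most $|X|^{l-1}$ choices, and for the tail I would bound $\sum_{|T|=k-l+1}\8e^{-\Delta(T)}$ by decomposing into integer shells of the spread and invoking Lemma~\ref{lem-01}: the number of such $T$ with $\Delta(T)\le r$ is at most $q_{[k-l+1]}|X|^{(k-l+1)/2}N(r)^{(k-l+1)/2}$, so after over-bounding $N(r)^{(k-l+1)/2}\le N(r)^{k-l+1}$ the geometric sum is controlled by $q_{[k-l+1]}|X|^{(k-l+1)/2}\hat N_{k-l+1}$. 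Multiplying the head and tail contributions produces the power $|X|^{(l-1)+(k-l+1)/2}=|X|^{(k+l-1)/2}$, which by the observation of the previous paragraph is at most $|X|^{n/2}$ on every surviving term.

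Assembling the three estimates and extending the finite sum over $l$ back to $1,\dots,k-1$ (thereby only adding nonnegative terms) yields $B_n(X)\le |X|^{n/2}\,n!\sum_{k=1}^n S(k,n)\sum_{l=1}^{k-1}q_{[k-l+1]}\hat N_{k-l+1}$, which is exactly the claimed bound with the stated $\hat B_n$; finiteness of $\hat B_n$ follows since each $\hat N_{k-l+1}$ is finite by regularity of $d$. The main obstacle is the bookkeeping in the second paragraph: correctly recognizing that the multiplicity constraints encoded by the $\delta$-factors are precisely what is needed to cap the exponent of $|X|$ at $n/2$, rather than the $(k+l-1)/2$ that the head-times-tail counting produces a priori.
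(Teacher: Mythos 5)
Your proof is correct and follows essentially the same route as the paper's: decompose tuples by range size $k$ into a combinatorial factor bounded by $n!\,S(k,n)$, use the multiplicity constraints only to extract the necessary condition $k+l-1\le n$, split the range into a freely-placed head ($|X|^{l-1}$ choices) and a tail whose exponentially weighted sum is controlled via Lemma~\ref{lem-01}, and conclude $|X|^{(k+l-1)/2}\le |X|^{n/2}$. The only differences are cosmetic (surjections and $k!\,S(k,n)$ in place of the paper's permutation/composition bookkeeping), and you are in fact slightly more careful than the paper in making explicit the over-bound $N(r)^{(k-l+1)/2}\le N(r)^{k-l+1}$ that justifies the appearance of $\hat N_{k-l+1}$.
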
  
\begin{proof}
To bound $B_n(X)$, we take advantage of the following fact: Let $P(n,k)$ be
the set of partitions of $n$ into $k$ non-vanishing summands. For each tuple
$x\in X^n$ there exists a 
$k$-elementary subset $Y=\{y_1,\cdots,y_k\}\subset X$
with $k\leq n$ and a partition $(n_1,\cdots,n_k)\in P(k,n)$ as
well as a permutation $\sigma\in S_n$ such that $x=\sigma(y_1^{\times
  n_1},\cdots, y_k^{\times n_k})$ where $y^{\times l}$ is the $l$-tuple with
constant entry $y$.  Note that the permutation $\sigma$ is not unique, since
permuting one of the sub-tuples $y_j^{\times n_j}$ leaves $x$ invariant. 
This implies the following bound for $B_n(X)$:
\begin{equation}
\begin{split}
B_n(X)\leq n!\sum_{k=1}^{n} \sum_{\{y_1,\cdots ,y_k\}\subset X}
\sum_{(n_1,\cdots,n_k)\in P(k,n)}
\sum_{l=1}^{k-1}\delta(n_1>1)\cdots\delta(n_{l-1}>1)\8e^{-\Delta(y_{l},\cdots,y_{k})}
\; .
\end{split}
\end{equation}
For fixed $l<k$, each partition $(n_1,\cdots, n_k)$ which contributes to the sum has to fulfill
the condition $l+k-1\leq n$. The constrained given by the ``delta-functions'' $\delta(n_1>1)\cdots\delta(n_{l-1}>1)$
can be replaced by the weaker constraint $\delta(l+k-1\leq n)$  which yields
an upper bound of the right hand side: 
\begin{equation}
\begin{split}
B_n(X)\leq n!\sum_{k=1}^{n} 
S(k,n)\sum_{l=1}^{k-1}\delta(l+k-1\leq n)\sum_{\{y_1,\cdots ,y_k\}\subset X} \8e^{-\Delta(y_{l},\cdots,y_{k})} 
\end{split}
\end{equation}
For each $l<k$ the sum over $k$-elementary subsets can be estimated as
follows:
\begin{equation}
\begin{split}
\sum_{\{y_1,\cdots ,y_k\}\subset X} \8e^{-\Delta(y_{l},\cdots,y_{k})} &\leq
|X|^{l-1}\sum_{\{y_{l},\cdots,y_{k}\}\subset
  X}e^{-\Delta(\{y_{l},\cdots,y_{k}\})} 
\\
&\leq |X|^{l-1}\sum_{r=1}^{\infty}N(X,k-l+1,r) \8e^{-r}
\\
&\leq q_{[k-l+1]}|X|^{(k+l-1)/2} \hat N_{k-l+1} 
\end{split}
\end{equation}
where we have used Lemma~\ref{lem-01}. This yields the desired bound  
\begin{equation}
\begin{split}
B_n(X)\leq n!\sum_{k=1}^{n} 
S(k,n)\sum_{l=1}^{k-1}  q_{[k-l+1]} \hat N_{k-l+1} \ |X|^{n/2}
\end{split}
\end{equation}
which completes the proof of the lemma.
\end{proof}

\begin{proof}[ of Theorem~\ref{thm-main}]
By taking advantage of the expansion of Proposition~\ref{prop-expansion} we
can write the state $\hat\omega_X=\hat\omega^{\otimes X}+F_X$. Let
us assume first, that the operators 
$a_1,\cdots,a_n$ belong to the kernel of the single site restriction
$\omega$. We first bound the correlation function of the functional $F_X$. 
\begin{equation}
\begin{split}
|F_X&(a_1\otimes\cdots\otimes a_n)|
\\
&\leq |X|^{-n/2}\sum_{x\in
  X^n}\sum_{k=1}^{|\8{Ran}(x)|-1} \prod_{l=1}^{k-1}|\omega(a_{y_l}^x)|
|G^x_k(a^x_{y_k},a^x_{y_{k+1}}\cdots
a^x_{y_{|\8{Ran}(x)|}})|\8e^{-\Delta(y_k,y_{k+1},\cdots,y_{|\8{Ran}(x)|})}
\\
&\leq |X|^{-n/2}\|a_1\|\cdots\|a_n\| G_0\sum_{x\in
  X^n}\sum_{k=1}^{|\8{Ran}(x)|-1} \prod_{l=1}^{k-1}\delta(|x^{-1}(y_l)|>1)
\8e^{-\Delta(y_k,y_{k+1},\cdots,y_{|\8{Ran}(x)|})}
 \; ,
\end{split}
\end{equation}
where we have used the fact that $\omega(a_y^x)=0$ whenever the pre-image
$|x^{-1}(y)|= 1$ contains only one element. By inserting the definition of the
quantities $B_n(X)$ we find from Lemma~\ref{lem-02}
\begin{equation}
\nu_n^\omega(F_X)\leq |X|^{-n/2} G_0 B_n(X) \; .
\end{equation}
By Proposition~\ref{prop-prod} the semi-norms
$\nu_n^\omega(\hat\omega^{\otimes X})$ 
are uniformly bounded, i.e. for each
$n\in\7N$, the constant $\hat
A_n:=\sup_{X\subset\Lambda}\nu_{n}^\omega(\hat\omega^{\otimes X})<\infty$ is
finite. Therefore we obtain for the induced state $\hat\omega_X$ the semi-norm
bound:
\begin{equation}
\begin{split}
\nu_{n}^\omega(\hat\omega_X)\leq \hat A_{n}+G_0\hat B_{n}
\end{split}
\end{equation}
Hence the semi-norm $\nu_n$ can be bounded by $\nu_n^\omega$ according to Lemma~\ref{lem:topol}: 
\begin{equation}
\begin{split}
\nu_n(F)\leq 
\sum_{k=0}^n {n\choose k} 2^{k} \ \nu_{k}^\omega(F) \; .
\end{split}
\end{equation}
By applying this bound to $\hat\omega_X$ we obtain   
\begin{equation}
\begin{split}
\nu_n(\hat\omega_X)
&\leq 
\sum_{k=0}^n 2^{k}  \ {n\choose k} \ (\hat A_{k}+G_0\hat B_{k}) \; .
\end{split}
\end{equation}
This implies that $\sup_{X\subset\Lambda}\nu_n(\hat\omega_X)<\infty$ and the
net $(\hat\omega_X)_{X\subset \Lambda}$ has weakly $\sqrt{\rm n}$-fluctuations. 
\end{proof}

\end{appendix}

\end{document}